\newif\ifTwoColumn
\newif\ifTechReport
\definecolor{LinkColor}{rgb}{1,0,0}	   
\acrodef{SPD}[SPD]{symmetric positive definite}
\acrodef{SPSD}[SPSD]{symmetric positive semi-definite}
\acrodef{QP}[QP]{quadratic program}
\acrodef{QPs}[QPs]{quadratic programs}
\acrodef{DSPs}[DSPs]{digital signal processors}
\acrodef{PLCs}[PLCs]{programmable logic controllers}
\acrodef{MPC}[MPC]{Model predictive control}
\acrodef{FPGAs}[FPGAs]{field programmable gate arrays}
\acrodef{FPGA}[FPGA]{field programmable gate array}
\acrodef{AFM}[AFM]{atomic force microscope}
\acrodef{LTI}[LTI]{linear time-invariant}
\newtheorem{remark}{Remark}
\newtheorem{assumption}{Assumption}
\newtheorem{proposition}{Proposition}
\newtheorem{theorem}{Theorem}
\newtheorem{lemma}{Lemma}
\newcommand{\diag}[1]{\operatorname{diag}\!\left( #1 \right)}
\newcommand{\SetK}{\mathbb{K}}
\newcommand{\SetA}{\mathbb{A}}
\newcommand{\SetU}{\mathbb{U}}
\newcommand{\SetX}{\mathbb{X}}
\newcommand{\SetQ}{\mathbb{Q}}
\newcommand{\SetXD}{{\mathbb{X}_{\Delta}}}
\newcommand{\Reals}[1]{\mathbb{R}^{#1}}
\newcommand{\Hm}{C}
\newcommand{\bm}{\gamma}
\newcommand{\Am}{\mathcal{M}}
\newcommand{\low}{\min}
\newcommand{\upp}{\max}
\newcommand{\free}{\text{$\mathcal{F}$}}
\newcommand{\bound}{\text{$\mathcal{B}$}}
\newcommand{\soft}{\text{$\mathcal{S}$}}
\newcommand{\cent}{\text{c}}
\newcommand{\set}[2]{\left\{ #1\ \left| \ #2 \right. \right\}}
\newcommand{\eqdef}{:=}
\newcommand{\norm}[1]{\left\| #1 \right\|}
\newcommand{\half}{\frac{1}{2}}
\DeclareMathOperator{\image}{Im}
\DeclareMathOperator{\kernel}{Ker}
\title{\LARGE Embedded Online Optimization for Model Predictive Control at Megahertz Rates}
\author{Juan L.~Jerez, Paul J.~Goulart, Stefan Richter, George A.~Constantinides, \\Eric C.~Kerrigan and Manfred Morari% <-this % stops a space
%%\thanks{This work was supported by the EPSRC (Grant EP/G031576/1)}% <-this % stops a space
\thanks{Juan L.~Jerez and George A.~Constantinides are with the Department of Electrical and Electronic Engineering, Imperial College London,
        SW7 2AZ, United Kingdom, {\tt\small jlj05|gac1@imperial.ac.uk}}%
\thanks{Paul J.~Goulart, Stefan Richter and Manfred Morari are with the Automatic Control Laboratory, ETH Z\"urich, 8092 Z\"urich, Switzerland,
        {\tt\small pgoulart|richters|morari@control.ee.ethz.ch}}%
\thanks{Eric C.~Kerrigan is with the Department of Electrical and Electronic Engineering and the Department of Aeronautics, Imperial College London,
        SW7 2AZ, United Kingdom,
        {\tt\small e.kerrigan@imperial.ac.uk}}%
}
\begin{document}

\maketitle
%\thispagestyle{empty}
%\pagestyle{empty} 

%\tableofcontents

%%%%%%%%%%%%%%%%%%%%%%%%%%%%%%%%%%%%%%%%%%%%%%%%%%%%%%%%%%%%%%%%%%%%%%%%%%%%%%%%

% !TEX root = ./journal2013.tex

\begin{abstract}
%
%
%slight vertical space reduction in the IEEE submitted version
%to avoid a single word line at the top of the second page
\ifTwoColumn
	%nothing
\else
	\ifTechReport
		%nothing
	\else
		\vspace{-1ex}
	\fi
\fi

Faster, cheaper, and more power efficient optimization solvers than those currently offered by general-purpose solutions are required for extending the use of model predictive control (MPC) to resource-constrained embedded platforms. We propose several custom computational architectures for different first-order optimization methods that can handle linear-quadratic MPC problems with input, input-rate, and soft state constraints. We  provide analysis ensuring the reliable operation of the resulting controller under reduced precision fixed-point arithmetic. Implementation of the proposed architectures in FPGAs shows that satisfactory control performance at a sample rate beyond 1~MHz is achievable even on low-end devices, opening up new possibilities for the application of MPC on embedded systems. 

%In this paper we present the first \ac{FPGA} implementation of a fast gradient solver for linear-quadratic \ac{MPC} problems with input constraints. We use fixed-point arithmetic to exploit the characteristics of the computing platform and provide analytical guarantees ensuring no overflow errors occur during operation. We further prove that the arithmetic errors due to round-off can lead only to reduced accuracy, but not instability, of the fast gradient method. The results are demonstrated on a model of an industrial   \ac{AFM} where we show that, on a low-end \ac{FPGA}, satisfactory control performance at a sample rate beyond 1~MHz is achievable, opening up new possibilities for the application of \ac{MPC}. 
% the next could be trimmed away if space is an issue:
%The findings of this paper can also be used for implementing fast gradient solvers on other low cost and low power platforms such as fixed-point \ac{DSPs} and embedded microcontrollers. 
\end{abstract}

%%%%%%%%%%%%%%%%%%%%%%%%%%%%%%%%%%%%%%%%%%%%%%%%%%%%%%%%%%%%%%%%%%%%%%%%%%%%%%%%
% !TEX root = ./journal2013.tex

\section{Introduction}

% wonders of MPC and challenges
\ac{MPC} provides a systematic approach for handling physical constraints for automatic control of cyber-physical systems~\cite{Maciejowski2001,RM2009}, often leading to improved control performance and reduced tuning effort for new applications. However, the intense computational demands imposed by \ac{MPC} precludes its use in applications that could benefit considerably from its advantages, especially in those that have fast required response times and in those that must run on resource-constrained, embedded computing platforms with low cost or low power requirements. 

% explicit
For linearly constrained \ac{MPC} problems of low dimensionality, one can partially avoid this computational burden by precomputing the solution map offline using multi-parametric programming~\cite{Bemporad2002}. In this case, the online controller implementation consists only of region search and table look-up procedures. Further work integrating the design of the solution map and embedded circuits has further increased the efficiency in performing these operations~\cite{Comaschi2012}. However, for larger problems, this approach quickly becomes impractical, mainly due to substantial memory requirements, forcing a return to online optimization methods. 

% first-order methods
Recently, there has been significant interest in using first-order methods, both in the primal~\cite{Richter2012TAC} and dual domains~\cite{Richter2011CDC,Kogel2012,Giselsson2012,Annergren2012}, for the online solution of linear-quadratic MPC problems. Compared to other solution methods for \ac{QPs}  (e.g.~active-set or interior-point schemes), first-order methods do not require the solution of a linear system of equations at every iteration, which is often a limiting factor for embedded platforms with modest computational capability. This feature, coupled with the observation that medium-accuracy solutions are often sufficient for good control performance~\cite{Wang2010}, make first-order methods promising candidates for efficient, low cost \ac{MPC}. In addition, first-order methods have certain features that make them amenable to fixed-point implementation, they can be efficiently parallelized, and their simplicity invites analysis that can guide low-level implementation choices for further efficiency gains. 

% practical efforts
There have been several recent efforts to translate innovation in optimization algorithms into practical solvers customized for \ac{MPC} problems. In terms of software, \cite{Domahidi2012,Mattingley2011} and ~\cite{Ullmann2011} describe automatic state-of-the-art code generators for interior-point and first-order solvers, respectively, whereas~\cite{Ferreau2008} describes a widely used active-set based solver. In all cases, embedded applications were the primary target, although the solvers are implemented using double precision floating-point arithmetic, which is generally not available or is very expensive in embedded computing platforms. In terms of hardware, \cite{Jerez2011,Vouzis2009,Wills2012} describe different custom computing architectures for both interior-point and active-set methods using reduced floating-point arithmetic in \ac{FPGAs}, reporting minor speed-ups or use of expensive devices to provide significant acceleration.  Although there has been some progress in accelerating the core component of these algorithms -- solvers for linear equations -- using fixed-point arithmetic~\cite{Jerez2012CDC}, extending these results to the other aspects of interior-point or active-set algorithms remains challenging.

% list contributions 
\subsection*{Summary of contribution}

In this paper we focus on practical and theoretical issues for efficient implementation of optimization-based control systems on low cost embedded platforms. 

\begin{enumerate}
\item {\it Architectures}: We present a set of parameterized automatic generators of custom computing architectures for solving different types of MPC problems. For input-constrained problems, we describe architectures for Nesterov's fast gradient method (first described in the preliminary publication~\cite{Jerez2013}), and for state-constrained problems we consider architectures based on the alternating direction method of multipliers (ADMM). Even if these methods are conceptually very different, they share the same computational patterns and similar computing architectures can be used to implement them efficiently. These architectures are extended to support warm starting procedures and the projection operations required in the presence of soft constraints.

\item {\it Analysis}: Since for a reliable operation using fixed-point arithmetic it is crucial to prevent overflow errors, we derive theoretical results that guarantee the absence of overflow in all variables of the fast gradient method. Furthermore, we present an error analysis of both the fast gradient method and ADMM under (inexact) fixed-point computations in a unified framework. This analysis underpins the numerical stability of the methods for hardware implementations and can be used to determine {\em a priori} the minimum number of bits required to achieve a given solution accuracy specification, resulting in minimal resource usage.

\item {\it Implementation}: We derive a set of design rules for efficient implementation of the proposed methods, such as a scaling procedure for accelerating the convergence of ADMM and criteria for determining the size of the Lagrange multipliers. The proposed architectures are characterized in terms of the achievable performance as a function of the amount of resources available. As a proof of concept, generated solver instances are demonstrated for several linear-quadratic MPC problems, reporting achievable controller sampling rates in excess of 1 MHz, while the controller can be implemented on a low cost embeddable device.

\end{enumerate}

% give outline
\subsection*{Outline}
The paper is organized as follows: 
After a brief summary of the general MPC formulation and the different first-order methods in Sections~\ref{mpc-setup} and \ref{solmeths}, we focus on the fixed-point analysis in Section~\ref{sec:fixanalysis}. We follow with the hardware architectures and performance evaluation in Sections~\ref{hw-sec} and~\ref{sec:benchmark}.

% This analysis is transferable to implementations in other fixed-point architectures.

% in the general case ADMM will be practically useful
%mostly in cases when modest accuracy is sufficient. Fortunately, this
%is usually the case for the kinds of large-scale problems we consider.
%Also, in the case of statistical and machine learning problems, solving
%a parameter estimation problem to very high accuracy often yields little
%to no improvement in actual prediction performance, the real metric
%of interest in applications.

% With ADMM, we trade the ability to solve more general problems for the ability to give analytical guarantees on numerical behavior.
% !TEX root = ./journal2013.tex

\section{Soft-Constrained Model Predictive Control Setup}\label{mpc-setup}

Throughout, we address control of a discrete-time \ac{LTI} system in the form
\begin{equation}\label{eqn:LTIsystem}
x^+ = Ax + Bu,
\end{equation}
where $x\in\Reals{n_x}$ is the system state and $u\in \Reals{n_u}$ is the system input.  The overall design goal is to construct a time-invariant (possibly nonlinear) static state feedback controller $\mu : \Reals{n_x}\to\Reals{n_u}$ such that $u = \mu(x)$ stabilizes the system \eqref{eqn:LTIsystem} while simultaneously satisfying a collection of state and input constraints in the time domain.

In standard design methods for constructing linear controllers for systems in the form \eqref{eqn:LTIsystem}, the bulk of the computational effort is spent \emph{offline} in identifying a suitable controller, whose \emph{online} implementation has minimal computing requirements.   The inclusion of state and input constraints renders most such design methods unsuitable.

A now standard alternative is to use \ac{MPC} \cite{Maciejowski2001,RM2009}, which moves the bulk of the required computationally effort \emph{online} and which addresses directly the system constraints.  At every sampling instant, given an estimate or measurement of the current state of the plant~$x$, an \mbox{\ac{MPC}} controller solves a constrained $N$-stage optimal control problem in the form
\ifTwoColumn
	\begin{align}\label{cost function}
		\begin{split}
    	J^*(x) = \min
        \frac{1}{2} (x_N-x_{ss})^T Q_N x_N + \frac{1}{2} 
        \sum_{k=0}^{N-1} x_{k}^T Q x_k + u_{k}^T R u_k \\
        + 2 x_k^T S u_k +
        \sum_{k=1}^N \left(\sigma_1 \cdot \mathbf{1}^T \delta_k + 
        \sigma_2 \cdot \norm{\delta_k}_2^2\right)
        \end{split}
        \\
    	&\begin{aligned}
 		\text{subject to  }
		\enspace x_0 &=  x, \nonumber \\
                      x_{k+1} &= A_d x_k + B_d u_k  ,  & k&= 0,1, \ldots, N-1, \nonumber\\
                      u_k &\in \SetU, &  k&=0, 1, \ldots, N - 1, \nonumber\\
                      \left(x_k, \delta_k \right) &\in \SetXD, &  k&=1, 2, \ldots, N. 
                      \nonumber
    	\end{aligned}
	\end{align}
\else
	\begin{align}\label{cost function}
    	J^*(x) = &\min
        \frac{1}{2} x_N^T Q_N x_N + \frac{1}{2} 
        \sum_{k=0}^{N-1} x_{k}^T Q x_k + u_{k}^T R u_k + 2 x_k^T S u_k +
        \sum_{k=1}^N \left(\sigma_1 \cdot \mathbf{1}^T \delta_k + 
        \sigma_2 \cdot \norm{\delta_k}_2^2\right)\\
    	&\begin{aligned}
 		\text{subject to  }
		\enspace x_0 &=  x, \nonumber \\
                      x_{k+1} &= A_d x_k + B_d u_k  ,  & k&= 0,1, \ldots, N-1, \nonumber\\
                      u_k &\in \SetU, &  k&=0, 1, \ldots, N - 1, \nonumber\\
                      \left(x_k, \delta_k \right) &\in \SetXD, &  k&=1, 2, \ldots, N. 
                      \nonumber
    	\end{aligned}
	\end{align}
\fi

If an optimal input sequence $\{u_i^*(x)\}_{i=0}^{N-1}$ and state trajectory $\{x_i^*(x)\}_{i=0}^{N}$ exists for this problem given the initial state $x$, then an \ac{MPC} controller can be implemented by applying the control input $u = u_0^*(x)$.  

We will assume throughout that the system input constraint set $\SetU$ is defined as a set of interval constraints 
$\SetU \eqdef \set{u}{u_{\low} \leq u \leq u_{\upp}}$.
We assume that the system states have both free (index set \free), hard-constrained (index set \bound) and soft-constrained (index set \soft) components, i.e.\ the set $\SetXD$ in \eqref{cost function} is defined as 
\ifTwoColumn
	\begin{align}\label{eq:softstate}
		\!\!\SetXD  = \set{\!(x, \delta) \in \Reals{n_x} \!\times \Reals{\left| \soft \right|}_{+}\!\!}
		{\!\!
		\begin{gathered}
		 x_\free \text{ free}, \, x_{\low} \leq x_{\bound} \leq x_{\upp}, \\
		\left| x_i - x_{\cent, i} \right| \leq r_i + \delta_i, \, i \in \soft
		\end{gathered}\!\!
		}\!, \!\!
	\end{align}
\else
	\begin{align}\label{eq:softstate}
		\SetXD  = \set{ (x, \delta) \in \Reals{n_x} \times \Reals{\left| \soft \right|}_{+}}
		{
		 x_\free \text{ free}, \, x_{\low} \leq x_{\bound} \leq x_{\upp}, \,
		\left| x_i - x_{\cent, i} \right| \leq r_i + \delta_i, \, i \in \soft
		}, 
	\end{align}
\fi
with $x_{\cent, i} \in \Reals{}$ being the center of the interval constraint of radius $r_i > 0$ for a soft-constrained state component. The index sets $\free, \bound$ and $\soft$ are assumed to be pairwise disjoint and to satisfy $\free\cup\bound\cup\soft = \left\{ 1, 2, \ldots, n_x \right\}$. % i.e.
%\begin{align*}
%\free \cap \bound = \emptyset, \enspace \free \cap \soft = \emptyset, \enspace \bound \cap \soft = \emptyset \quad \text{and} \quad \free \cup \bound \cup \soft = \left\{ 1, 2, \ldots, n_x\right\} \enspace.
%\end{align*}

We assume throughout that the  penalty matrices~$(Q, Q_N) \in\mathbb{R}^{n_x\times n_x}$ are positive semidefinite, %\ac{SPSD}, 
$R\in\mathbb{R}^{n_u \times n_u}$ is positive definite,  %\ac{SPD} 
 and $S\in\mathbb{R}^{n_x\times n_u}$ is chosen such that the objective function in~\eqref{cost function} is jointly convex in the states and inputs.  There is by now a considerable body of literature \cite{bib:maynerawlings2000,RM2009} describing conditions on the penalty matrices and/or horizon length $N$ sufficient to ensure that the resulting \ac{MPC} controller is stabilizing (even when no terminal state constraints are imposed), and we do not address this point further.   For stability conditions for soft-constrained problems, the reader is referred to~\cite{Zeilinger2010} and~\cite{bib:scokaert1999} and the references therein.
 
If the soft-constrained index set~$\soft$ is nonempty, then a linear-quadratic penalty on the auxiliary variables~$\delta_k \in \Reals{|S|}_+$, weighted by positive scalars $(\sigma_1,\sigma_2)$, can be added to the objective.  
In practice, soft constraints are a common measure to avoid infeasibility of the MPC problem~\eqref{cost function} in the presence of disturbances. However, there also exist hard state constraints that can always be enforced and cannot lead to infeasibility, such as state constraints arising from remodeling of input-rate constraints.  For the sake of generality we address both types of state constraints in this paper.

If $\sigma_1$ is chosen large enough, then the optimization problem \eqref{cost function} corresponds to an \emph{exact penalty} reformulation of the associated hard-constrained problem (i.e.\ one in which the optimal solution of \eqref{cost function} maintains $\delta_k = 0$ if it is possible to do so). An exact penalty formulation preserves the optimal behavior of the MPC controller when all constraints can be enforced.  We first characterize conditions under which a soft constraint penalty function for a convex optimization problem is {exact}. 
\ifTwoColumn
	\begin{theorem}[{\cite[Prop.\ 5.4.5]{bib:bertsekas_nonlinear_1999}}]\label{thm:exactpen}
\else
	\begin{theorem}[Exact Penalty Function for Convex Programming {\cite[Prop.\ 5.4.5]%	
	{bib:bertsekas_nonlinear_1999}}]\label{thm:exactpen}
\fi
Consider the convex problem
\begin{align} \label{eq:orig}
f^* \eqdef &\min_{z \in \SetQ} \, f(z) \\
&\begin{aligned}
\text{subject to } & g_j(z) \leq 0 \, , \quad j = 1, 2, \ldots, r, \nonumber 
\end{aligned}
\end{align}
where $f : \Reals{n}  \rightarrow \Reals{}$ and $g_j : \Reals{n} \rightarrow \Reals{}$, $j = 1, \ldots, r$, are convex, real-valued functions and $\SetQ$ is a closed convex subset of $\Reals{n}$.  Assume that an optimal solution~$z^*$ exists with $f(z^*) = f^*$, strong duality holds and an optimal Lagrange multiplier vector $\mu^* \in \Reals{r}_{+}$ for the inequality constraints exists.
{
\renewcommand{\theenumi}{\roman{enumi}}
\begin{enumerate}
\item If $\sigma_1 \geq \| \mu^* \|_{\infty}$ and $\sigma_2 \ge 0$, then\begin{align}\label{eq:pen}
f^* = & \min_{z \in \SetQ} \, f(z) +  \sum_{j=1}^{r}\left(\sigma_1 \cdot\delta_j + \sigma_2 \cdot \delta_j^2\right) \\
&\text{subject to } g_j(z) \leq  \delta_j, \quad \delta_j \geq  0,  \quad j = 1, 2, \ldots, r. \nonumber
\end{align} \label{thm:st1} 
\item If $\sigma_1 > \| \mu^* \|_{\infty}$ and $\sigma_2 \ge 0$, the set of minimizers of the penalty reformulation in~\eqref{eq:pen} coincides with the set of minimizers of the original problem in~\eqref{eq:orig}. \label{thm:st2}
\end{enumerate}
}
\end{theorem}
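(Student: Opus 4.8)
The plan is to reduce the penalized problem~\eqref{eq:pen} to a form without the slack variables $\delta_j$ and then pin down its optimal value as $f^*$ from two sides, using the Lagrangian of~\eqref{eq:orig} evaluated at the optimal multiplier~$\mu^*$. First I would eliminate the slacks: for fixed $z \in \SetQ$ the constraints $g_j(z) \le \delta_j$ and $\delta_j \ge 0$ are together equivalent to $\delta_j \ge g_j^+(z)$ with $g_j^+(z) \eqdef \max\{0, g_j(z)\}$, and since $\sigma_1, \sigma_2 \ge 0$ the scalar map $\delta_j \mapsto \sigma_1 \delta_j + \sigma_2 \delta_j^2$ is nondecreasing on $\Reals{}_+$. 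Hence the inner minimization over $\delta$ is solved coordinatewise by $\delta_j = g_j^+(z)$, so~\eqref{eq:pen} has the same optimal value as $\min_{z \in \SetQ} P(z)$, where $P(z) \eqdef f(z) + \sum_{j=1}^r\big(\sigma_1 g_j^+(z) + \sigma_2 (g_j^+(z))^2\big)$, and the optimal $\delta$ is recovered from the optimal $z$ through this same formula.

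Next I would establish the lower bound. For every $z \in \SetQ$, discarding the nonnegative quadratic term and using $\sigma_1 \ge \|\mu^*\|_\infty \ge \mu_j^* \ge 0$ together with $g_j^+(z) \ge g_j(z)$ gives $P(z) \ge f(z) + \sum_j \sigma_1 g_j^+(z) \ge f(z) + \sum_j \mu_j^* g_j(z) = L(z,\mu^*)$, where $L(z,\mu) \eqdef f(z) + \sum_{j=1}^r \mu_j g_j(z)$ is the Lagrangian of~\eqref{eq:orig}. Since strong duality holds and $\mu^*$ is an optimal multiplier, $\inf_{z \in \SetQ} L(z,\mu^*) = f^*$, so $P(z) \ge f^*$ for all $z$. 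For the matching upper bound I would simply evaluate $P$ at the optimal solution $z^*$ of~\eqref{eq:orig}: feasibility forces $g_j^+(z^*) = 0$ for all $j$, whence $P(z^*) = f(z^*) = f^*$. Combining the two, $\min_{z \in \SetQ} P(z) = f^*$, attained at $z^*$ with $\delta = 0$, which is exactly statement~(i).

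For statement~(ii) I would argue as follows, now assuming the strict bound $\sigma_1 > \|\mu^*\|_\infty$. Every optimal $z^*$ of~\eqref{eq:orig} attains $P(z^*) = f^*$ by the previous step, hence minimizes $P$. Conversely, let $\bar z$ minimize $P$, so $P(\bar z) = f^*$. If $g_\ell^+(\bar z) > 0$ for some index $\ell$, then in the lower-bound chain the inequality at coordinate $\ell$ becomes strict, $\sigma_1 g_\ell^+(\bar z) > \mu_\ell^* g_\ell^+(\bar z)$, yielding $f^* = P(\bar z) > L(\bar z,\mu^*) \ge f^*$, a contradiction; hence $g_j(\bar z) \le 0$ for all $j$, i.e.\ $\bar z$ is feasible for~\eqref{eq:orig}. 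Feasibility then gives $f^* \le f(\bar z) \le P(\bar z) = f^*$, so $f(\bar z) = f^*$ and $\bar z$ is a minimizer of~\eqref{eq:orig}. Carrying this through the correspondence $\delta_j = g_j^+(z) = 0$ from the first step shows the minimizer sets of~\eqref{eq:pen} and~\eqref{eq:orig} coincide.

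I expect the only real care to be needed in the elimination step, where the hypotheses $\sigma_1,\sigma_2\ge 0$ enter through monotonicity of the per-coordinate penalty, and in the observation that the quadratic term plays no active role: it is dropped (being nonnegative) in the lower bound and vanishes automatically at feasible points, so it neither strengthens nor weakens either inequality. The strict-inequality bookkeeping that separates~(i) from~(ii) is then the only remaining subtlety, and it reduces to tracking which single coordinate can be strictly violated.
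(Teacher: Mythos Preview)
The paper does not supply its own proof of this theorem; it is stated with a citation to Bertsekas and followed immediately by a remark, so there is nothing in the paper to compare against line by line.  Your argument is correct and is essentially the standard one: eliminate the slacks by monotonicity of the per-coordinate penalty on~$\Reals{}_+$, bound $P(z)$ below by the Lagrangian at~$\mu^*$ using $\sigma_1 \ge \mu_j^* \ge 0$ and $g_j^+ \ge g_j$, invoke strong duality for $\inf_{\SetQ} L(\cdot,\mu^*) = f^*$, and match the bound by evaluating at a feasible optimizer~$z^*$.  The strict-inequality bookkeeping you describe for part~(ii) is also the standard route, and your observation that $\sigma_1 > \|\mu^*\|_\infty \ge 0$ forces $\sigma_1 > 0$ (hence the slack-eliminating correspondence $\delta_j = g_j^+(z)$ is the unique minimizer in~$\delta$) is what makes the identification of minimizer sets clean.
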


%We include a proof of this result in the Appendix.

%\begin{remark}
%Strong duality and existence of a Lagrange multiplier follow, e.g., from Slater's condition.
%\end{remark}

\begin{remark}
In the context of the MPC problem~\eqref{cost function}, the penalty reformulation is exact if the penalty parameter~$\sigma_1$ is chosen to be greater than the largest Lagrange multiplier for any constraint $\left| x_i - x_{\cent, i} \right| \leq r_i$, $i \in \soft$, over all feasible initial states~$x$. In general, this bound is unknown \emph{a priori} and is treated as a tuning parameter in the control design.   The quadratic penalty parameter $\sigma_2$ need not be nonzero for such a penalty formulation to be exact, but the inclusion of a nonzero quadratic term is necessary for our numerical stability results under fixed-point arithmetic in Section \ref{sec:fixanalysis}. % see Section bla,bla
\end{remark}

For the sake of notational simplicity, the results of this paper are presented with reference to the optimal control problem in regulator form in~\eqref{cost function}.  However, all of our results generalize easily to setpoint tracking problems.

% !TEX root = ./journal2013.tex

\section{First-Order Solution Methods}\label{solmeths}
We next describe two different first-order optimization methods for solving the optimal control problem~\eqref{cost function} efficiently.   In particular, we apply the primal fast gradient method (FGM) in cases where only input-constraints are present, and a dual method based on the alternating direction method of multipliers (ADMM) for cases in which both state- and input-constraints are present.

%\todo{
%\begin{itemize}
%\item Summarize the main features of the paper's solution methods, which is the fast gradient method for input-constrained MPC in Section~\ref{fg-sec} and ADMM using Boyd's splitting in Section~\ref{admm} 
%\item Make clear to the reader why ADMM can be seen essentially as a first-order method (the (augmented) dual gradient is not computed exactly though, only in a Gauss-Seidel fashion)
%\item General flow of Sections~\ref{fg-sec} and \ref{admm}:
%\begin{itemize}
%\item Transform problem from MPC format in Section~\ref{mpc-setup} into general, method-specific format for the sake of simplified notation
%\item State method in algorithm box
%\item Point out hardware specific features of the method (which parts can be parallelized) and features that allow for saving computation time (which values can be precomputed offline, re-arranging expressions such that they fit better a hardware implementation)
%\item State convergence (speed) result of method.
%\end{itemize}
%\end{itemize}
%}

%%%%%%%%%%%%%%%%%%%%%%%%%%%%%%%%%%%
\subsection{Input-Constrained MPC Using the Fast Gradient Method}\label{fg-sec}
The fast gradient method is an iterative solution method for smooth convex optimization problems first published by Nesterov in the early 80s \cite{bib:nesterov1983}, which requires the objective function to be strongly convex~\cite[\S 9.1.2]{BoydBook04}. The method can be applied to the solution of MPC problem~\eqref{cost function} if the future state variables $x_i$ are eliminated by expressing them as a function of the initial state, $x$, and the future input sequence (so-called \emph{condensing} \cite{Maciejowski2001}), resulting in the problem
\begin{align}
f^*(x) = &\min_z f(z; x) \eqdef \frac{1}{2} z^TH_{F}z + z^T \Phi x \label{costfuncQP}\\
&\text{subject to }  z \in \SetK, \nonumber %\label{constraintsQP}
\end{align}
where $z := (u_0,\ldots,u_{N-1}) \in \mathbb{R}^n$, $n =N n_u$, the Hessian $H_{F} \in \Reals{n \times n}$ is positive definite under the assumptions in Section~\ref{mpc-setup}, and the feasible set is given as $\mathbb{K} := \mathbb{U}\times \ldots \times \mathbb{U}$. The current state only enters the gradient of the linear term of the objective through the matrix~$\Phi \in \Reals{n \times n_x}$.

We consider the \emph{constant step scheme II} of the fast gradient method in \cite[\textsection 2.2.3]{bib:nesterov2004}. Its algorithmic scheme for the solution of~\eqref{costfuncQP}, optimized for parallel execution on parallel hardware, is given in Algorithm~\ref{alg:fg2-algorithm}. Note that the state-independent terms $(I-\frac{1}{L}H_{F}) $, $\frac{1}{L} \Phi$ and $(1 + \beta)$ can all be computed offline and that the product $\frac{1}{L} \Phi x$ must only be evaluated once. The core operations in Algorithm~\ref{alg:fg2-algorithm} are the evaluation of the gradient (implicit in line~\ref{line:two}) and the projection operator of the feasible set, $\pi_{\SetK}$, in line~\ref{line:three}. Since for our application the set~$\SetK$ is the direct product of the $N$ $n_u$-dimensional sets $\SetU$, it suffices to consider $N$ independent projections that can be performed in parallel. For the specific case of a box constraint on the control input, every such projection corresponds to $n_u$ scalar projections on intervals, each computable analytically. In this case, the fast gradient method requires only multiplication and addition, which are considerably faster and use significantly less resources than division when implemented using digital circuits. % Holds for both floating- and fixed-point arithmetic

It can be inferred from \cite[Theorem 2.2.3]{bib:nesterov2004} that for every state~$x$, Algorithm~\ref{alg:fg2-algorithm} generates a sequence of iterates $\{ z_i \}_{i = 1}^{I_{\max}}$ such that the residuals $f(z_i; {x}) - f^*(x)$ are bounded by
\begin{align}\label{eq:fgmconv}
\min \Biggl\{ \!\! \bigg( 1 - \sqrt{\frac{1}{\kappa}} \biggr)^i \!\!, 
\frac{4 \kappa}{(2 \sqrt{\kappa} + i)^2} \! \Biggr\} \! \cdot
2  \bigl( f(z_0; {x}) - f^*(x) \bigr) ,
\end{align}
for all $i = 0, \ldots, I_{\max}$, where $\kappa$ denotes the condition number of $f$, or an upper bound of it, given by $\kappa \!=\! L/\mu$, where $L$ and $\mu$ are a Lipschitz constant for the gradient of $f$ and convexity parameter of $f$, respectively. Note that the convexity parameter $f$ for a strongly convex quadratic objective function as in~\eqref{costfuncQP} corresponds to the minimum eigenvalue of $H_F$.  Based on this convergence result, which states that the  bound exhibits the best of a linear and a sublinear rate, one can derive a certifiable and practically relevant iteration bound~$I_{\max}$ such that the final residual is guaranteed to be within a specified level of suboptimality for all initial states arising from a bounded set \cite{Richter2012TAC}. It can further be proved that there is no other variant of a gradient method with better theoretical convergence \cite{bib:nesterov2004}, i.e.\ the fast gradient method is an \emph{optimal} gradient method, in theory.

\begin{algorithm}[t!]
\begin{algorithmic}[1]
\REQUIRE Initial iterate $z_0 \in \SetK$, $y_0=z_0$, upper (lower) bound~$L$ ($\mu > 0$) on maximum (minimum) eigenvalue of Hessian~$H_{F}$,
step size $\beta = \bigl( \sqrt{L} - \sqrt{\mu} \bigr) / \bigl( \sqrt{L} + \sqrt{\mu} \bigr)$
\FOR{$i=0$ to $I_{\max}-1$}

\STATE $t_{i} := (I-\frac{1}{L}H_{F}) y_i - \frac{1}{L} \Phi x$ \label{line:two}

\STATE $z_{i+1}:=\pi_{{\SetK}}(t_{i})$ \label{line:three}

\STATE $y_{i+1}:=(1+\beta) z_{i+1}-\beta z_{i}$ \label{line:four}

\ENDFOR \\
\end{algorithmic}
\caption{Fast gradient method for the solution of MPC problem~\eqref{costfuncQP} at state~$x$ (\emph{optimized for parallel hardware})}
\label{alg:fg2-algorithm}
\end{algorithm}

The fast gradient method is particularly attractive for application to MPC in embedded control system design due both to the relative ease of implementation and to the availability of strong performance certification guarantees.  However, its use is limited to cases in which the projection operation $\pi_{\SetK}$ is simple, e.g.\ in the case of box-constrained inputs.  
Unfortunately, the inclusion of {state} constraints changes the geometry of the feasible set $\SetK$ such that the projection subproblem is as difficult as the original problem, since the constraints are no longer separable in $u_k$.  In the next section we therefore describe an alternative solution method in the dual domain that avoids these complications, though at the expense of some of the strong certification advantages.

%%%%%%%%%%%%%%%%%%%%%%%%%%%%%%%%%%%
\subsection{Input- and State-Constrained MPC Using ADMM}\label{admm}

In the presence of state constraints, if one imposes~$(Q, Q_N) \in\mathbb{R}^{n_x\times n_x}$ to be positive definite, the fast gradient method can be used again to solve the dual problem via Lagrange relaxation of the  equality constraints~\cite{Richter2011CDC}. However, in this case the dual function is \emph{not} strongly concave and consequently the convergence speed is severely affected. A quadratic regularizing term can be added to the Lagrangian to improve convergence, but this prevents the use of distributed operations for computing the gradient of the dual function, adding a significant computational overhead.   We therefore seek an alternative approach in the dual domain.

For dual problems we do \emph{not} work in the condensed format~\eqref{costfuncQP}, but rather maintain the state variables $x_k$ in the vector of decision variables $z := (u_0,\dots,u_{N-1},x_0,\delta_0,\dots,x_N,\delta_{N}) \in \mathbb{R}^n$, $n=N(n_u+n_x + |\soft|)+n_x + |\soft|$, resulting in the problem
\begin{align}
f^*(x) = &\min_z  f(z; x) \eqdef \frac{1}{2} z^TH_{A}z + z^Th  \label{costfuncQPSparse}\\ %\Phi x \label{costfuncQPSparse}\\
&\text{subject to }  z \in \SetK,\,\,  Fz = b(x).\nonumber %\label{constraintsQP}
\end{align}

The affine constraint $Fz = b(x)$ models the dynamic coupling of the states $x_k$ and $u_k$ via the state update equation~\eqref{eqn:LTIsystem}, and is at the root of the difficulty in projecting the variables $z$ onto the constraints in the fast gradient method. 

The alternating direction method of multipliers (ADMM) \cite{Boyd2011} partitions the optimization variables into two (or more) groups to maintain the possibility of decoupled projection.   In applying ADMM to the specific problem~\eqref{costfuncQP}, we maintain an additional copy $y$ of the original decision variables $z$ and solve the problem
\ifTwoColumn
	\begin{align}
		f^*(x) = \min_{z,y} f(z, y; x) &:= \frac{1}{2} y^TH_{A}y + y^T h \nonumber\\
		&\hspace{4ex}+ I_{\SetA}(y;x) +
	 	I_{\SetK}(z) + \frac{\rho}{2}\| y-z \|^2 \label{costfuncQP-ADMM}\\
		\text{subject to }  z = y,  \label{ADMMmatching}
	\end{align}
\else
	\begin{align}
		f^*(x) = \min_{z,y} f(z, y; x) &:= \frac{1}{2} y^TH_{A}y + y^T h + I_{\SetA}(y;x) +
	 	I_{\SetK}(z) + \frac{\rho}{2}\| y-z \|^2 \label{costfuncQP-ADMM}\\
		\text{subject to }  z = y,  \label{ADMMmatching}
	\end{align}
\fi
where $(z,y) \in \Reals{2n}$ contain copies of all input, state and slack variables.  The functions $I_{\SetA} : \Reals{n} \times \Reals{n_x}  \to \{0,+\infty\}$ and $I_{\SetK} : \Reals{n}  \to \{0,+\infty\}$ are indicator functions for the sets described by the equality and inequality constraints, respectively, e.g.
\begin{align}\label{equ_const}
I_{\SetA}(y,x) := \begin{cases} 0 & \text{ if } Fy=b(x) \, , \\ \infty & \text{ otherwise}  \, , \end{cases} 
\end{align}
where $\mathbb{K} \eqdef \mathbb{U}\times \ldots \times \mathbb{U} \times \SetXD \times \ldots \times \SetXD$. The current state $x$ enters the optimization problem through~(\ref{equ_const}).  The inclusion of the regularizing term $(\rho/2)\| y-z \|^2$ has no impact on the solution to~\eqref{costfuncQP-ADMM} (equivalently~\eqref{costfuncQPSparse}) due to the compatibility constraint $y=z$, but it does allow one to drop the smoothness and strong convexity conditions on the objective function, so that one can solve control problems with more general cost functions such as those with $1$- or $\infty$-norm stage costs.

Note that there are many possible techniques for copying and partitioning of variables in ADMM. In the context of optimal control, the choice given in~(\ref{costfuncQP-ADMM}) results in attractive computational structures~\cite{ODonoghue2013}.

The dual problem for~\eqref{costfuncQP-ADMM} is given by
\ifTwoColumn
	\begin{multline*}
	\max_{\nu} g(\nu) := \inf_{z,y} L_{\rho}(z,y,\nu) := 
	\frac{1}{2} y^TH_{A}y + y^T h + I_{\SetA}(y;x) \\ + I_{\SetK}(z) + \nu^T(y-z) + 
	\frac{\rho}{2}\| y-z \|^2 \, .
	\end{multline*}
\else
	\begin{align*}
	\max_{\nu} g(\nu) := \inf_{z,y} L_{\rho}(z,y,\nu) := 
	\frac{1}{2} y^TH_{A}y + y^T h + I_{\SetA}(y;x) + I_{\SetK}(z) + \nu^T(y-z) + 
	\frac{\rho}{2}\| y-z \|^2 \, .
	\end{align*}
\fi

ADMM solves this dual problem using an approximate gradient method by repeatedly carrying out the steps
\begin{subequations}\label{eqn:ADMMsketch}
\begin{align}
y_{i+1} 	& := \arg \min_y L_{\rho}(z_i,y,\nu_i)  \, , \label{unc_step} \\ 
z_{i+1} 	& := \arg \min_z L_{\rho}(z,y_{i+1},\nu_i)  \, , \label{conc_step} \\
\nu_{i+1} 	& := \nu_i + \rho (y_{i+1}-z_{i+1})   \, . \label{dual_grad} 
\end{align}
\end{subequations}
The gradient of the dual function is approximated by the expression $(y_{i+1}-z_{i+1})$ in~(\ref{dual_grad}), which employs a single Gauss-Seidel pass instead of a joint minimization to allow for decoupled computations. Choosing the regularity parameter $\rho$ also as the step-length arises from Lipschitz continuity of the (augmented) dual function.
%Choosing the regularity parameter $\rho$ also as the step-length ensures dual feasibility of the iterates~\cite{Boyd2011}. 
There are at present no universally accepted rules for selecting the value of the penalty parameter however, and it is typically treated as a tuning parameter during implementation.

Our overall algorithmic scheme for ADMM for the solution of~\eqref{costfuncQP-ADMM} based on the sequence of operations~\eqref{unc_step}--\eqref{dual_grad}, optimized for parallel execution on parallel hardware, is given in Algorithm~\ref{alg:admm-algorithm}.  The core computational tasks are the equality-constrained optimization problem~\eqref{unc_step} and the inequality-constrained, but separable, optimization problem~\eqref{conc_step}.

In the case of the equality-constrained minimization step~(\ref{unc_step}), a solution can be computed from the KKT conditions by solving the linear system \begin{align*}
\begin{bmatrix}
H_{A} + \rho I & F^T \\
F             & 0
\end{bmatrix} \begin{bmatrix}
y_{i+1} \\
\lambda_{i+1}
\end{bmatrix} = \begin{bmatrix}
- h -\nu_{i} + \rho z_{i}\\
b(x)
\end{bmatrix}.
\end{align*}

Note that only the vector $y_{i+1}$, and not the multiplier $\lambda_{i+1}$, arising from the solution of this linear system is required for our ADMM method.  The most efficient method to solve for $y_{i+1}$ is to invert the (fixed) KKT matrix \emph{offline}, i.e.\ to compute
\begin{align*}
\begin{bmatrix}
M_{11} & M_{12} \\
M_{12}^T & M_{22}
\end{bmatrix} = \begin{bmatrix}
H_{A} + \rho I & F^T \\
F             & 0
\end{bmatrix}^{-1} \, ,
\end{align*}
 and then to obtain $y_{i+1}$ \emph{online} from $y_{i+1} = M_{11} \left( -h-\nu_{i} + \rho z_{i} \right) + M_{12} b(x)$ as in Line~\ref{line:two2} of Algorithm~\ref{alg:admm-algorithm}.  Observe that the product $M_{12}b(x)$ needs to be evaluated only once, and that this matrix is always invertible when $\rho > 0$ since $F$ has full row rank.

The inequality-constrained minimization step~(\ref{conc_step}) results in the projection operation in Line~\ref{line:three2} of Algorithm~\ref{alg:admm-algorithm}. In the presence of soft state constraints, this operation requires independent projections onto a truncated two-dimensional cone, which can be efficiently parallelized and require no divisions.  We describe efficient implementations of this projection operation in parallel hardware in Section~\ref{hw-sec}. 

\begin{algorithm}[t!]
\begin{algorithmic}[1]
\REQUIRE Initial iterate $z_0 = z^{*-}$, $\nu_0=\nu^{*-}$, where $z^{*-}$ and $\nu^{*-}$ are the shifted solutions at the previous time instant (see Section~\ref{hw-sec}), and $\rho$ is a constant power of 2.
\FOR{$i=0$ to $I_{\max}-1$}

\STATE $y_{i+1} := M_{11}(-h+\rho z_i-\nu_i) + M_{12}b(x)$ \label{line:two2}

\STATE $z_{i+1}:=\pi_{{\SetK}}(y_{i+1}+\frac{1}{\rho}\nu_i)$ \label{line:three2}

\STATE $\nu_{i+1}:=\rho y_{i+1} +\nu_i - \rho z_{i+1}$ \label{line:four2}

\ENDFOR \\
\end{algorithmic}
\caption{ADMM for the solution of MPC problem~\eqref{costfuncQP} at state~$x$ (\emph{optimized for parallel hardware})}
\label{alg:admm-algorithm}
\end{algorithm}

This variant of ADMM is known to converge; see~\cite[\S3.4; Prop.~4.2]{bib:bertsekas1997} for general convergence results. More recently, a bound on the convergence \emph{rate} was established in \cite{bib:admmrate2011}, where it was shown that the error in ADMM, for a different error function, decreases as $1/i$, % where ADMM is embedded in a variational inequality  
where $i$ is the number of iterations.  
This result still compares unfavorably relative to the known $1/i^2$ convergence rate for the fast gradient method in the dual domain.  However, the observed convergence behavior of ADMM in practice is often significantly faster than for the fast gradient method~\cite{Boyd2011}.

%\todo{
%\begin{itemize}
%\item For general convergence, refer to \cite[\textsection3.4, Proposition~4.2]{bib:bertsekas1997}. Recently, also the convergence rate was proven in \cite{bib:admmrate2011} where the ADMM is embedded in a variational inequality setting, but -- compared to FGM -- a different `error function' is used and speed of convergence is only $1/i$ where $i$ denotes the iteration counter; note that for a different error function, the FGM would converge with $1/i^2$ in the dual domain. Yet, observed convergence for ADMM is often better than for FGM in dual domain (I don't have a reference for this claim, maybe look at Boyd's ADMM paper). Big drawback: No clear rule to tune the penalty parameter (maybe cite some recent works that tune it by minimizing some spectral radius (Colin might know references)).
%\end{itemize}
%}

\subsection{ADMM, Lagrange multipliers and soft constraints}\label{const_scaling}

Despite its generally excellent empirical performance, ADMM can be observed to converge very slowly in certain cases. In particular, for MPC problems in the form~\eqref{costfuncQP}, convergence may be very slow in those cases where there is a large mismatch between the magnitude of the optimal Lagrange multipliers $\nu^*$ for the equality constraint~\eqref{ADMMmatching} and the magnitude of the primal iterates $(z_i,y_i)$.  The reason is evident from the ADMM multiplier update step~\eqref{dual_grad}; the existence of very large optimal multipliers $\nu^*$ necessitates a large number of ADMM iterations when the difference $(z_{i}-y_{i})$ remains small at each iteration and $\rho \approx 1$. 

This effect is of particular concern for MPC problem instances with soft constraints.  If one denotes by $z_\delta$ those components of $z$ associated with the slack variables  
$\{\delta_1,\dots,\delta_N\}$ (with similar notation for $y_\delta$), then the objective function~\eqref{costfuncQP-ADMM} features a term $\sigma_1\cdot\mathbf{1}^Ty_\delta$, with the exact penalty term $\sigma_1$ typically very large.  The equality constraints~\eqref{ADMMmatching} include the matching condition $z_\delta - y_\delta = 0$, with associated Lagrange multiplier $\nu_\delta$.  Recalling the usual sensitivity interpretation of the optimal multiplier $\nu_\delta^*$, one can conclude that  $\nu_\delta^* \approx \sigma_1\cdot\mathbf{1}$ in the absence of unusual problem scaling\footnote{If one sets the regularization parameter $\rho = 0$ in~\eqref{costfuncQP-ADMM} and $\sigma_2 = 0$, then it can be shown that this approximation becomes exact.}.   

For soft constrained problems, we avoid this difficulty by rescaling those components of the matching condition~\eqref{ADMMmatching} to the equivalent condition $(1/\sigma_1)(z_\delta - y_\delta) = 0$, which results in a rescaling of the associated optimal multipliers to $\nu_\delta^* \approx 1$.  The aforementioned convergence difficulties due to excessively large optimal multipliers are then avoided.

\section{Fixed-Point Aspects of First-Order Solution Methods}\label{sec:fixanalysis}

In this section we  first motivate the use of fixed-point arithmetic from a hardware efficiency perspective and then isolate potential error sources under this arithmetic. We concentrate on two types of errors. For \emph{overflow errors} we provide analysis to guarantee that they cannot occur in the fast gradient method, whereas for \emph{arithmetic round-off errors} we prove that there is a converging upper bound on the total incurred  error in either of the two methods. The results we obtain hold under the assumptions in Section~\ref{sssec:assume_fgm} and guarantee reliable operation of  first-order methods on fixed-point platforms.

%%%%%%%%%%%%%%%%%%%%%%%%%%%%%%%%%%%%%%%%%%%%%%
\subsection{Fixed-Point Arithmetic and Error Sources}\label{ssec:fixvsfloat}
Modern computing platforms must allow for a wide range of applications that operate on data with potentially large dynamic range, i.e.\ the ratio of the smallest to largest number to be represented. For general purpose computing, \emph{floating-point} arithmetic provides the necessary flexibility. A floating-point number consists of a sign bit, a mantissa, and an exponent value that moves the binary point with respect to the mantissa. The dynamic range grows doubly exponentially with the number of exponent bits, making it possible to represent a wide range of numbers with a relatively small number of bits. However, because two operands can have different exponents, it is necessary to perform \emph{denormalization} and \emph{normalization} operations before and after every addition or subtraction, leading to increased resource usage and long arithmetic delays. 

In contrast, hardware platforms employing \emph{fixed-point} numbers use a fixed number of bits for the integer and fraction fields, i.e.\ the exponent does not vary and does not have to be stored. Fixed-point computations are the same as with integer arithmetic, hence the digital circuitry is simple and fast, leading to greater power efficiency and significant potential for acceleration via extra parallelization in a custom hardware implementation. For instance, in a typical modern FPGA platform~\cite{logicoreFP_v50} fixed-point addition takes one clock cycle, %($l_A=1$),  
whereas a single precision floating-point adder would require 14 cycles %to achieve similar clock frequencies 
while using one order of magnitude more resources for the same number of bits. 

%%%%%%%%%%%%%%%%%%%%%%%%%%%%%%%%%%%%%%%%%%%%%%
%\subsection{Error Sources in Fixed-Point Arithmetic}\label{ssec:errorsources}
The benefits of fixed-point arithmetic motivate its use in first-order methods to realize fast and efficient implementations of Algorithms~\ref{alg:fg2-algorithm}  and~\ref{alg:admm-algorithm} on FPGAs or other low cost and low power devices with no floating-point support, such as embedded microcontrollers, fixed-point \ac{DSPs} or \ac{PLCs}. However, reduced precision representations and fixed-point computations incur several types of errors that must be accounted for.  These include:

\subsubsection*{Quantization Errors} Finite representation errors arise when converting the problem and algorithm data from high precision to reduced precision data formats. Potential consequences include  loss of problem convexity, change of optimal solution and a lack of feasibility with respect to the original problem.
\subsubsection*{Overflow Errors} Overflow errors occur whenever the number of bits for the integer part in the fixed-point representation is too small, and can cause unpredictable behavior of the algorithm.
\subsubsection*{Arithmetic Errors} Unlike with floating-point arithmetic, fixed-point addition and subtraction operations involve no round-off error provided there is no overflow and the result has the same number of fraction bits as the operands~\cite{Wilkinson63}. For multiplication, the exact product of two numbers with $b$ fraction bits can be represented using $2b$ fraction bits, hence a $b$-bit truncation of a 2's complement number incurs a round-off error bounded from below by $-2^{-b}$. Recall that in 2's complement arithmetic, truncation incurs a negative error both for positive and negative numbers. 

%We focus on overflow and arithmetic errors next and derive results which hold for the following setup and assumptions.
%We focus on the latter two error sources in the remainder of the paper.

%%%%%%%%%%%%%%%%%%%%%%%%%%%%%%%%%%%%%%%%%%%%%%
\subsection{Notation and Assumptions}\label{sssec:assume_fgm}
%We henceforward consider the solution to the following normalized fixed-point representation of the optimization problem~\eqref{costfuncQP}: 
%\begin{align}
%f_{n}^*(\hat{x}) = 
%&\min_z f_{n}(z; \hat{x}):=\frac{1}{2} z^T \hat{H}_{n} z + z^T \hat{h}_n(\hat{x}) \label{eq:costfuncQPfixednorm}\\
%&\text{subject to} \enspace  z \in {\widehat{\SetK}}, \nonumber 
%\end{align}
%whose problem data will be explained below. 

We will use $\hat{(\cdot)}$ throughout in order to distinguish quantities in a fixed-point representation from those in an exact representation and under exact arithmetic.   Throughout, we assume for simplicity that all variables and problem data are represented using the same number of fraction bits $b$.  We further assume that the feasible sets under finite precision satisfy $\widehat{\SetK} \subseteq \SetK$, so that solutions in fixed point arithmetic do not produce infeasibility in the original problem due to quantization error.

We conduct separate analyses of both overflow and arithmetic errors for the fast gradient method (Algorithm \ref{alg:fg2-algorithm}) and ADMM (Algorithm \ref{alg:admm-algorithm}). In both cases, the central requirement is to choose the number of fraction bits $b$ large enough to ensure satisfactory numerical behavior.  We therefore employ two different sets of assumptions depending on the numerical method in question:

\begin{assumption}[Fast Gradient Method / Algorithm \ref{alg:fg2-algorithm}]\label{assum:FGM}
The number of fractions bits $b$ and a constant $c\ge 1$ are chosen large enough such that
\begin{enumerate}[i)]

\item \label{assum:FGM:conv} 
The matrix 
\begin{align*}
H_{n} = \frac{1}{c \cdot \lambda_{\max}(\hat{H}_F)} \cdot \hat{H}_F,
\end{align*}
has a fixed-point representation $\hat{H_{n}}$ with all of its eigenvalues in the interval $(0,1]$, where $\hat{H}_F$ is the fixed-point representation of the Hessian~$H_F$, with $\lambda_{\max}(\hat{H}_F)$ its maximum eigenvalue.

\item\label{assum:FGM:beta}
The fixed-point step size~$\hat{\beta}$ satisfies
\begin{align*}
1 > \hat{\beta} \geq \Bigl({\sqrt{\kappa\bigl(\hat{H}_{n}\bigr)} - 1}\Bigr) \Bigl({\sqrt{\kappa\bigl(\hat{H}_{n}\bigr)} + 1}\Bigr)^{-1} \geq 0 \, ,
\end{align*}
where $\kappa(\hat{H}_{n})$ is the condition number of $\hat{H}_{n}$.

\end{enumerate}
\end{assumption}

\begin{assumption}[ADMM / Algorithm \ref{alg:admm-algorithm}]\label{assume:ADMM}
The number of fractions bits $b$ is chosen large enough such that
\begin{enumerate}[i)]\label{assum:ADMM}

\item \label{assum:ADMM:M11_M12}
The matrix
\begin{align*}
\left(
\begin{bmatrix}
\hat{M}_{11} & \hat{M}_{12} \\
\hat{M}_{12}^T & M_{22}
\end{bmatrix}^{-1} - 
\begin{bmatrix}
\rho I & \hat F^T \\
\hat F & 0
\end{bmatrix}
\right)
\end{align*}
is positive semidefinite, where $\rho$ is chosen such that it is exactly representable in $b$ bits.

\item \label{assum:ADMM:model_feas} The quantization errors in the matrix $\hat{F}$ are insignificant compared to the model uncertainty.

\end{enumerate}
\end{assumption}

Observe that it is always possible to select $b$ sufficiently large to satisfy all of the preceding assumptions, implying that the above conditions represent a lower bound on the number of fraction bits required in a fixed-point implementation of our two algorithms to ensure that our stability results are valid.  
Assumptions \ref{assum:FGM}.\eqref{assum:FGM:conv} and \ref{assum:ADMM}.\eqref{assum:ADMM:M11_M12} ensure that the objective functions \eqref{costfuncQP} (for the fast gradient method) and \eqref{costfuncQP-ADMM} (for ADMM) remain strongly convex and convex, respectively, despite any quantization error.

In the case of the fast gradient method, Assumption~\ref{assum:FGM}.\eqref{assum:FGM:beta} guarantees that the \emph{true} condition number of $\hat{H}_n$ is not underestimated, in which case the convergence result of the fast gradient method in~\eqref{eq:fgmconv} would be invalid. In fact, the assumption ensures that the effective condition number for the convergence result is given by
\begin{align}\label{eq:effcond}
\kappa_{n} = \biggl( \frac{1 + \hat{\beta}}{1 - \hat{\beta}} \biggr)^2 \geq \kappa\bigl(\hat{H}_n\bigr).
\end{align}
%
%Observe that the objective in~\eqref{eq:costfuncQPfixednorm} differs from the original objective in~\eqref{costfuncQP} \emph{approximately} 
%by a factor of $(c \cdot \lambda_{\max}(\hat{H}))^{-1}$. The lack of exactness stems from fixed-point round-off errors, which we disregard in the analysis below.
%
%%%%%%%%%%%%%%%%%%%%%%%%%%%%%%%%%%%%%%%%%%%%%%
\subsection{Overflow Errors}\label{sssec:overflow_fgm}
% Make sure that the reader understands that different variables have different number of bits for the integer part.
% Iterates are not normalized.

In order to avoid overflow errors in a fixed-point implementation, the largest absolute values of the iterates' and intermediate variables' components must be known or upper-bounded \emph{a priori} in order to determine the number of bits required for their \emph{integer parts}.
For the \emph{static} problem data $(I - \hat{H}_n)$, $\hat{\Phi}_n$, $1+ \hat{\beta}$, $\hat{\beta}$, $\hat{M}_{11}$, or $\hat{M}_{12}$, the number of integer bits is easily determined 
by the maximum absolute value in each expression. 

%%%%%%%%%%%%%%%%%%%%%%%%%%%%%%%%%%%%%%%%%%%%%%
\subsubsection{Overflow Error Bounds in the Fast Gradient Method}~

In the case of the fast gradient method, it is possible to bound analytically the largest absolute values of all of the dynamic data, i.e.\ the variables that change with every iteration.   We will 
denote by $\hat{\Phi}_{n}$ the fixed-point representation of
\begin{align*}
{\Phi}_{n} = \frac{1}{c \cdot \lambda_{\max}(\hat{H}_F)} \cdot \Phi .
\end{align*}
%Furthermore, we use $\hat{h}_{n}(\hat{x})$ to denote the result of the matrix-vector product $\hat{\Phi}_{n}  \hat{x}$ in \emph{fixed-point} arithmetic.
%\todo{This last definition $\hat{h}_{n}(\hat{x})$ not used anywhere?}

We summarize the upper bounds on variables appearing in the fast gradient method in the following proposition.%
\begin{proposition}\label{prop:bounds}
If problem~\eqref{costfuncQP} is solved by the fast gradient method using the appropriately adapted Algorithm~\ref{alg:fg2-algorithm}, then the largest absolute values of the iterates and intermediate variables are given by
\begin{align}
\| \hat{z}_{i+1} \|_{\infty} &\leq \bar{z} := \max \left\{ \| \hat{z}_{\min} \|_{\infty}, \| \hat{z}_{\max} \|_{\infty} \right\}, \nonumber \\
\| \hat{y}_{i+1} \|_{\infty} &\leq \bar{y} := \bar{z} + \hat{\beta} \| \hat{z}_{\max} - \hat{z}_{\min} \|_{\infty}, \nonumber \\
\| (I - \hat{H}_n) \, \hat{y}_i \|_{\infty} &\leq \bar{y}_{\text{inter}} := \| I - \hat{H}_n \|_{\infty} \cdot \bar{y}, \label{eq:matrixvec}\\ 
\| \hat{x} \|_{\infty} &\leq \bar{x} := \max_{x \in \widehat{\SetX}_0} \| x \|_{\infty}, \nonumber\\
\| \hat{\Phi}_n \hat{x} \|_{\infty} &\leq \bar{h} := \| \hat{\Phi}_n \|_{\infty} \cdot \bar{x}, \text{ and} \nonumber\\
\| t_i \|_{\infty} &\leq \bar{t} := \bar{y}_{\text{inter}} + \bar{h}, \nonumber
\end{align}
for all  $i = 0, 1, \ldots, I_{\max}-1$. The set~$\widehat{\SetX}_0$ is chosen such that for every state in exact arithmetic $x \in \SetX_0$ we have $\hat{x} \in \widehat{\SetX}_0$.
\end{proposition}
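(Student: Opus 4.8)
The plan is to derive all six bounds in a single forward sweep through one iteration of Algorithm~\ref{alg:fg2-algorithm}, exploiting the crucial structural fact that the projection in line~\ref{line:three} re-imposes the box bounds of $\widehat{\SetK}$ at every iteration, so that neither the iterate magnitudes nor any arithmetic round-off can accumulate over iterations. Write $\widehat{\SetK} = [\hat{z}_{\min}, \hat{z}_{\max}]$ for the (componentwise) box of exactly representable bounds.

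First I would bound $\hat{z}_{i+1}$. Because $\widehat{\SetK}$ is a Cartesian product of intervals, the projection $\pi_{\widehat{\SetK}}$ acts componentwise as clipping to $[\hat{z}_{\min}, \hat{z}_{\max}]$ and therefore uses only comparisons and selections, with no arithmetic round-off. Hence $\hat{z}_{i+1} \in \widehat{\SetK} \subseteq \SetK$, and since each component of a point of an interval lies between the two endpoints we have $|(\hat{z}_{i+1})_j| \le \max\{|(\hat{z}_{\min})_j|, |(\hat{z}_{\max})_j|\}$; maximizing over $j$ gives $\|\hat{z}_{i+1}\|_\infty \le \bar{z}$. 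Since the algorithm requires $\hat{z}_0 \in \widehat{\SetK}$ and sets $\hat{y}_0 = \hat{z}_0$, the same argument gives $\|\hat{y}_0\|_\infty \le \bar{z} \le \bar{y}$, which anchors the remaining estimates.

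Next, rewriting line~\ref{line:four} as $\hat{y}_{i+1} = \hat{z}_{i+1} + \hat{\beta}(\hat{z}_{i+1} - \hat{z}_i)$ and using that $\hat{z}_{i+1}$ and $\hat{z}_i$ both lie in the box, so $\|\hat{z}_{i+1} - \hat{z}_i\|_\infty \le \|\hat{z}_{\max} - \hat{z}_{\min}\|_\infty$, together with $\hat{\beta} \ge 0$ (Assumption~\ref{assum:FGM}), the triangle inequality gives $\|\hat{y}_{i+1}\|_\infty \le \bar{z} + \hat{\beta}\|\hat{z}_{\max} - \hat{z}_{\min}\|_\infty = \bar{y}$; with the base case this yields $\|\hat{y}_i\|_\infty \le \bar{y}$ for all $i$. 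Submultiplicativity of the $\ell_\infty$-induced matrix norm (the maximum absolute row sum) then gives $\|(I - \hat{H}_n)\hat{y}_i\|_\infty \le \|I - \hat{H}_n\|_\infty\,\bar{y} = \bar{y}_{\text{inter}}$, and likewise $\|\hat{\Phi}_n \hat{x}\|_\infty \le \|\hat{\Phi}_n\|_\infty\,\bar{x} = \bar{h}$, where $\|\hat{x}\|_\infty \le \bar{x}$ is immediate from the definition of $\widehat{\SetX}_0$. Finally, since in the normalized version of the algorithm line~\ref{line:two} reads $t_i = (I - \hat{H}_n)\hat{y}_i - \hat{\Phi}_n\hat{x}$, with $\hat{\Phi}_n\hat{x}$ evaluated once, one more triangle inequality gives $\|t_i\|_\infty \le \bar{y}_{\text{inter}} + \bar{h} = \bar{t}$. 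Note there is no circularity here: the bound on $\hat{z}_{i+1}$ comes from the projection, not from $t_i$, so the estimates close without genuine induction beyond propagating $\|\hat{y}_i\|_\infty \le \bar{y}$.

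The only delicate point is the fixed-point round-off in lines~\ref{line:two} and~\ref{line:four}. Additions and subtractions of $b$-fraction-bit operands are exact in the absence of overflow, while 2's complement truncation of each product introduces an error in $(-2^{-b}, 0]$, so a computed quantity can exceed its exact value in magnitude by at most a small multiple of $2^{-b}$; as the bounds above are used only to size the \emph{integer} parts of the registers, these perturbations lie well below the relevant scale and can be absorbed into the right-hand sides. I expect this accounting, rather than any of the norm inequalities, to be the main obstacle, and the argument I would emphasize is that the clipping in line~\ref{line:three} caps $\hat{z}_{i+1}$ regardless of how large $t_i$ is --- which is exactly what prevents the recursion from amplifying either the iterates or the round-off from one iteration to the next.
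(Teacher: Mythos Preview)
Your proof is correct and follows the same approach as the paper, which simply states that the result ``follows from interval arithmetic and properties of the vector/matrix $\| \cdot \|_{\infty}$-norm.'' You have spelled out exactly those steps --- the box-clipping argument for $\hat{z}_{i+1}$, the rewriting $\hat{y}_{i+1} = \hat{z}_{i+1} + \hat{\beta}(\hat{z}_{i+1} - \hat{z}_i)$ with the interval bound on the difference, and submultiplicativity plus the triangle inequality for the remaining quantities --- and your discussion of why the projection prevents error accumulation and your treatment of the sub-$2^{-b}$ round-off perturbations go somewhat beyond the paper's one-line justification.
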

\begin{proof}
Follows from interval arithmetic and properties of the vector/matrix $\| \cdot \|_{\infty}$-norm.
\end{proof}

Note that normalization of the objective as introduced in Section~\ref{sssec:assume_fgm} has no effect on the maximum absolute values of the iterates. Furthermore, the bound in~\eqref{eq:matrixvec} also applies for the intermediate elements/cumulative sums in the evaluation of the matrix-vector product. Observe that most of the bounds stated in Proposition~\ref{prop:bounds} are tight.

\subsubsection{Overflow Error Bounds in ADMM}~

If problem~(\ref{costfuncQP-ADMM}) is solved using ADMM via Algorithm~\ref{alg:admm-algorithm}, then we do not know of any general method to upper bound the Lagrange multiplier iterates $\nu_i$ analytically, and consequently are unable to establish analytic upper bounds on all expressions involving {\em dynamic} data. In this case, one must instead estimate the undetermined upper bounds through simulation and add a safety factor when allocating the number of integer bits. As a result, with ADMM, we trade analytical guarantees on numerical behavior for the capability to solve more general problems.

%%%%%%%%%%%%%%%%%%%%%%%%%%%%%%%%%%%%%%%%%%%%%%
\subsection{Arithmetic Round-Off Errors}\label{sssec:arithmeticError}

We next derive an upper bound on the deviation of an optimal solution $\hat z^*$ produced via a fixed-point implementation of either Algorithm~\ref{alg:fg2-algorithm} or~\ref{alg:admm-algorithm} from the optimal solutions produced from the same algorithms implemented using exact arithmetic.  In both cases, we denote by $\hat{z}_i$ a \emph{fixed-point} iterate. We wish to relate these iterates to the iterates $z_i$ generated under \emph{exact arithmetic}, by establishing a bound in the form
\[
\norm{\hat{z_i}-z_i} = \norm{\eta_i} \le \Delta_i
\]
with $\lim_{i\to\infty} \Delta_i$ finite, where $\eta_i := \hat{z_i}-z_i$ is the solution error attributable to arithmetic round-off error up to the $i^{th}$ iteration.  Consequently, we can show that inaccuracy in the computed optimal solution induced by arithmetic errors in either algorithm are bounded, which is a crucial prerequisite for reliable operation of first-order methods on fixed-point platforms.  

In both cases, we use a control-theoretic approach based on standard Lyapunov methods to derive bounds on the solution error arising specifically from fixed-point arithmetic error.  For simplicity of exposition, we consider only those errors arising from arithmetic errors and neglect quantization errors in the analysis.  This choice does not alter substantively the results presented for either algorithm.    %
Our approach is in contrast to (and more direct than) other approaches to error accumulation analysis in the fast gradient method such as~\cite{bib:baes2009,bib:schmidt2011}, which consider inexact gradient computations but do not address arithmetic round-off errors explicitly.  In the case of ADMM, we are not aware of any existing results relating to error accumulation in fixed-point arithmetic.

\subsubsection{Stability of Arithmetic Errors in the Fast Gradient Method}~

We consider first the numerical stability of the fast gradient method, by examining in detail the arithmetic error introduced at each step of a fixed-point implementation of Algorithm~\ref{alg:fg2-algorithm}.  

At iteration~$i$, the error in line~\ref{line:two} of Algorithm~\ref{alg:fg2-algorithm} is given by
\begin{align*}
\hat{t}_i - t_i = (I - \hat{H}_n) (\hat{y}_i - y_i) + \epsilon_{t,i}\, ,
\end{align*}
where $\epsilon_{t,i}$ is a vector of errors from the matrix-vector multiplication. Since there are $n$ round-off errors in the computation of every component, $\epsilon_{t,i}$ is componentwise in the interval $[ -n 2^{-b}, 0 ]$.

For the projection in line~\ref{line:three}, and recalling that $\widehat{\SetK} \subseteq \SetK$ is a box, no arithmetic error is introduced.  Indeed, one can easily verify that the error $\hat{t}_i - t_i$ can only be reduced by 
multiplication with a diagonal matrix~$\diag{\epsilon_{\pi,i}}$, with $\epsilon_{\pi,i}$ componentwise in the interval $[0,1]$. 

Finally, in line~\ref{line:four}, the error induced by fixed-point arithmetic is
\begin{align*}
\hat{y}_{i+1} - y_{i+1} = (1 + \hat{\beta}) \eta_{i+1} - \hat{\beta} \eta_{i} + \epsilon_{y, i}\, , 
\end{align*}
where two scalar-vector multiplications incur error $\epsilon_{y, i}$ with components in $[ -2^{-b}, 2^{-b} ]$ (addition \emph{and} subtraction). Defining the initial error residual terms $\eta_{-1} = \eta_0 = \hat{z}_0 - z_0$, and setting  $\hat{z}_0 - z_0 = \hat{y}_0 - y_0$, one can derive the two-step recurrence
\begin{align*}%\label{eq:twostepeta}
\eta_{i+1} = \diag{\epsilon_{\pi,i}} \!  \bigl( I \!-\! \hat{H}_n \bigr) \bigl( \eta_i \!+\! \hat{\beta} (\eta_i \!-\! \eta_{i-1}) \!+\! \epsilon_{y, i-1} \bigr) \!+ \epsilon_{t, i} 
\end{align*}
for the accumulated arithmetic error at each iteration.    Note that the error~$\eta_{i}$ at each iteration is inherently bounded by the box~$\widehat{\SetK}$.  However, in the absence of the projection operation of line~\ref{line:three} and the associated error truncation, these errors remain bounded. To show this, we can express the evolution of the arithmetic error using the two-step recurrence
\begin{align}\label{eq:onestep_fg}
\underbrace{\begin{bmatrix}
\eta_{i+1} \\
\eta_{i}
\end{bmatrix}}_{=: \xi_{i+1}} =
&  \underbrace{\begin{bmatrix}
\bigl( 1 + \hat{\beta} \bigr) \bigl( I - \hat{H}_n \bigr)  &   - \hat{\beta} \bigl( I - \hat{H}_n \bigr)  \\
I   &   0
\end{bmatrix}}_{=: A}
\underbrace{\begin{bmatrix}
\eta_{i} \\
\eta_{i-1}
\end{bmatrix}}_{\xi_{i}} \nonumber \\
&+
\underbrace{
\begin{bmatrix}
 \bigl( I - \hat{H}_n \bigr)   &  I \\
0   &   0
\end{bmatrix}}_{=: B}
\underbrace{\begin{bmatrix}
\epsilon_{y, i-1} \\
\epsilon_{t, i}
\end{bmatrix}}_{=: \upsilon_i},
\end{align}
and then show that this linear system is stable.  Recalling Assumption~\ref{assum:FGM}, which bounds the eigenvalues of $\hat{H}_n$ in the interval $(0,1]$ and $\hat\beta$ in the interval $[0,1)$, we can use the following result:

\begin{lemma}\label{lem:FGMstable}
Let~$\Hm$ be any symmetric positive definite matrix with maximum eigenvalue less than or equal to one. For every constant~$\bm$ in the interval $[0,1]$ the matrix 
\begin{align*}
\Am = \begin{bmatrix}
( 1 + \bm ) ( I - \Hm )    &   - \bm ( I - \Hm )  \\
I   &   0
\end{bmatrix}
\end{align*}
is Schur stable, i.e.~its spectral radius~$\rho(\Am)$ is less than one.
\end{lemma}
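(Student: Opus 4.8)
The plan is to simultaneously block-diagonalize $\Hm$ and the companion-type matrix $\Am$ so that the stability question reduces to a family of scalar problems, one for each eigenvalue of $\Hm$. Since $\Hm$ is symmetric positive definite, write $\Hm = U \Lambda U^T$ with $U$ orthogonal and $\Lambda = \diag{\lambda_1, \ldots, \lambda_n}$, where each $\lambda_k \in (0,1]$ by hypothesis. Conjugating $\Am$ by $\diag{U,U}$ (which is orthogonal and hence preserves the spectrum) turns $I - \Hm$ into $\diag{1-\lambda_1,\ldots,1-\lambda_n}$, and after a permutation of rows and columns the conjugated matrix becomes block diagonal with $2\times 2$ blocks
\begin{align*}
\Am_k = \begin{bmatrix}
(1+\bm)(1-\lambda_k) & -\bm(1-\lambda_k) \\
1 & 0
\end{bmatrix}, \qquad k = 1, \ldots, n.
\end{align*}
Thus $\rho(\Am) = \max_k \rho(\Am_k)$, and it suffices to show $\rho(\Am_k) < 1$ for each $k$.

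Fix $k$ and set $a := 1 - \lambda_k \in [0,1)$ for brevity. The characteristic polynomial of $\Am_k$ is $p(\zeta) = \zeta^2 - (1+\bm)a\,\zeta + \bm a$. Now apply a standard Schur--Cohn (Jury) stability test for degree-two polynomials: both roots of $\zeta^2 + b_1\zeta + b_0$ lie strictly inside the unit disc if and only if $|b_0| < 1$ and $|b_1| < 1 + b_0$. Here $b_0 = \bm a$ and $b_1 = -(1+\bm)a$. The first condition reads $\bm a < 1$, which holds since $\bm \in [0,1]$ and $a < 1$ give $\bm a < 1$. The second condition reads $(1+\bm)a < 1 + \bm a$, i.e.\ $a + \bm a < 1 + \bm a$, i.e.\ $a < 1$, which is exactly $\lambda_k > 0$. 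Both hypotheses of the lemma — $\lambda_k > 0$ and $\lambda_k \le 1$ (even $\lambda_k \ge 0$ suffices for the $|b_0|<1$ part once combined with $a<1$) — are therefore used, and $\rho(\Am_k) < 1$ follows.

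The only delicate point is the boundary behavior: when $\lambda_k = 1$ we get $a = 0$ and $\Am_k$ is nilpotent, so $\rho(\Am_k) = 0 < 1$ trivially; and we must be careful that the strict inequality $a < 1$ is genuinely available, which it is precisely because $\lambda_k > 0$ is assumed strictly. I expect the main obstacle to be purely bookkeeping — carrying out the simultaneous block-diagonalization cleanly (the permutation that interleaves the two $U$-blocks into $2\times 2$ blocks) and then invoking the correct form of the second-order Schur--Cohn criterion without sign errors. One could alternatively avoid the Jury test by solving $p(\zeta) = 0$ explicitly and bounding $|\zeta_\pm|$, but the discriminant $(1+\bm)^2 a^2 - 4\bm a$ can change sign, so the criterion-based argument is cleaner and is the route I would present.
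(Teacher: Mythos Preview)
Your proposal is correct and follows essentially the same route as the paper: diagonalize $\Hm$ (equivalently $I-\Hm$) to reduce $\Am$ to a family of $2\times 2$ companion blocks, then apply a Schur--Cohn/Jury-type criterion to each characteristic polynomial $\zeta^2-(1+\bm)a\,\zeta+\bm a$ with $a\in[0,1)$. The paper phrases the quadratic-root criterion slightly differently (it lists three sufficient conditions on the coefficients of $x^2+2bx+c$, one of which is redundant), but the reduction and the verification are the same.
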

\begin{proof}
Assume the eigenvalue decomposition $I-C = V^T\Lambda V$, with $\Lambda$ diagonal with entries $\lambda_i \in [0,1)$.  The eigenvalues of $\Am$ are unchanged by left- and right-multiplication by $\left[\begin{smallmatrix}V & \\& V\end{smallmatrix}\right]$ and its transpose.  It is therefore sufficient to examine instead the spectral radius of
\[
\Am_D = \begin{bmatrix}
( 1 + \bm ) \Lambda     &   - \bm \Lambda  \\
I   &   0
\end{bmatrix}.
\]
Since this matrix has exclusively diagonal blocks, its eigenvalues coincide with those of the two-by-two submatrices
\[
\Am_{D,i} = \begin{bmatrix}
( 1 + \bm )\lambda_i     &   - \bm\lambda_i  \\
1   &   0
\end{bmatrix}, \quad \text{for } i = 1, \ldots, n,
\]
and it is sufficient to prove that every such submatrix has spectral radius less than one. Note that the eigenvalues of $\Am_{D,i}$ are the roots of the characteristic equation 
%Equating the eigenvalue sum to the trace of this matrix, and the eigenvalue product to its determinant, one can verify that the eigenvalues $\mu_i$ of $M_i$ are solutions to
\begin{equation}\label{eqn:quadraticRoots}
\mu^2 - (1+\bm)\lambda_i\mu + \lambda_i\bm = 0.
\end{equation}
It is easily verified that a sufficient condition for any quadratic equation in the form
\[
x^2 + 2bx + c = 0
\]
to have roots strictly inside the unit disk 
is for its coefficients to satisfy i) $|b| < 1$, ii) $c < 1$ and iii) $2|b| < c+1$.  For the eigenvalue solutions to \eqref{eqn:quadraticRoots}, this amounts to i) $(1+\bm)\lambda_i/2 \!<\! 1$, ii) $\lambda_i\gamma < 1$ and iii) $(1+\bm)\lambda_i < \gamma\lambda_i+1$.  All three conditions are easily confirmed for the case $\lambda_i \in [0,1)$, $\bm \in [0,1]$.
\end{proof}

\subsubsection{Stability of Arithmetic Errors in ADMM}~

%%%%%%%%%%%%%%%%%%%%%%%%%%%%%%%%%%%%%%%%%%%%%%%
As in the preceding section, for ADMM one can analyze in detail the arithmetic error introduced at each step of a fixed-point implementation of Algorithm~\ref{alg:admm-algorithm}.  

Defining $\eta_i := \hat{z}_i - z_i$, $\gamma_i := \hat{\nu}_i - \nu_i$, a similar analysis to that of the preceding section produces the two-step error recurrence
\ifTwoColumn
	\begin{align}\label{eq:onestep_admm}
	\underbrace{\begin{bmatrix}
	\eta_{i+1} \\
	\gamma_{i+1}
	\end{bmatrix}}_{=: \xi_{i+1}} \!=\!
	&  \underbrace{\footnotesize\begin{bmatrix}
	\rho \diag{\epsilon_{\pi,i}} \hat{M}_{11}  &   
	\!-\!\diag{\epsilon_{\pi,i}} (\hat{M}_{11}\!-\!\frac{1}{\rho}I)   \\
	\rho^2 \hat{M}_{11}(I\!-\!\diag{\epsilon_{\pi,i}})   &   
	(I \!-\! \rho \hat{M}_{11})(I\!-\!\diag{\epsilon_{\pi,i}})
	\end{bmatrix}}_{=: A}
	\underbrace{\begin{bmatrix}
	\eta_{i} \\
	\gamma_{i}
	\end{bmatrix}}_{\xi_{i}} \nonumber \\
	&+
	\underbrace{
	\footnotesize\begin{bmatrix}
	 \diag{\epsilon_{\pi,i}}  & 0\\
	\rho (I-\diag{\epsilon_{\pi,i}}) & I  
	\end{bmatrix}}_{=: B}
	\underbrace{\begin{bmatrix}
	\epsilon_{y, i} \\
	\epsilon_{\nu, i}
	\end{bmatrix}}_{=: \upsilon_i},
	\end{align}
\else
	\begin{align}\label{eq:onestep_admm}
	\underbrace{\begin{bmatrix}
	\eta_{i+1} \\
	\gamma_{i+1}
	\end{bmatrix}}_{=: \xi_{i+1}} =
	&  \underbrace{\begin{bmatrix}
	\rho \diag{\epsilon_{\pi,i}} \hat{M}_{11}  &   
	\!-\!\diag{\epsilon_{\pi,i}} (\hat{M}_{11}-\frac{1}{\rho}I)   \\
	\rho^2 \hat{M}_{11}(I-\diag{\epsilon_{\pi,i}})   &   
	(I- \rho \hat{M}_{11})(I-\diag{\epsilon_{\pi,i}})
	\end{bmatrix}}_{=: A}
	\underbrace{\begin{bmatrix}
	\eta_{i} \\
	\gamma_{i}
	\end{bmatrix}}_{\xi_{i}} \nonumber \\
	&+
	\underbrace{
	\begin{bmatrix}
	 \diag{\epsilon_{\pi,i}}  & 0\\
	\rho (I-\diag{\epsilon_{\pi,i}}) & I  
	\end{bmatrix}}_{=: B}
	\underbrace{\begin{bmatrix}
	\epsilon_{y, i} \\
	\epsilon_{\nu, i}
	\end{bmatrix}}_{=: \upsilon_i},
	\end{align}
\fi
where: $\epsilon_{y,i} \in [ -n 2^{-b}, 0 ]^n$ is a vector of multiplication errors arising from Algorithm \ref{alg:admm-algorithm}, line~\ref{line:two}; 
$\epsilon_{\pi, i} \in [  0,1 ]^n$ is a vector of error reduction scalings arising from the projection operation in line~\ref{line:three}; and  $\epsilon_{\nu,i} \!\in\! [ -2^{-b}, 2^{-b} ]^n$ a vector multiplication errors arising from \ref{line:four} with $\epsilon_{\nu,-1} = 0$. Note that one can show that even when $\widehat{\SetK}$ is not a box in the presence of soft state constraints, the error can only be reduced by the projection operation. The initial iterates of the recurrence relation are $\eta_{-1} = \eta_0$, where $\eta_0 := \hat{z}_0 - z_0$. 

As in the case of the fast gradient method, these arithmetic errors are inherently bounded by the constraint set $\widehat{\SetK}$.  In the absence of these bounding constraints (so that $\diag{\epsilon_{\pi,i}} = I$), one can still establish that the arithmetic errors are bounded via examination of the eigenvalues of the matrix
\begin{align}\label{ADMM_err_mat}
N := \begin{bmatrix}
\rho \hat{M}_{11}  &   -  (\hat{M}_{11}-\frac{1}{\rho}I)   \\
0   &  0
\end{bmatrix}.
\end{align}
Recalling Assumption~\ref{assum:ADMM}, we have the following result:
\begin{lemma}\label{lem:ADMMstable}
The matrix $N$ in \eqref{ADMM_err_mat} is Schur stable for any $\rho > 0$.  
\end{lemma}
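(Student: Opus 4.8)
The plan is to first exploit the block structure of $N$ and thereby reduce Schur stability to a spectral bound on $\hat M_{11}$. Since the $(2,1)$ and $(2,2)$ blocks of $N$ both vanish, $N$ is block upper triangular, so $\operatorname{spec}(N) = \operatorname{spec}(\rho\hat M_{11}) \cup \{0\}$; moreover $\hat M_{11}$ is the leading block of the inverse of a symmetric matrix and is therefore symmetric with real spectrum. It thus suffices to prove that $0 \preceq \hat M_{11} \prec \frac{1}{\rho}I$, which places every eigenvalue of $\rho\hat M_{11}$ in $[0,1)$ and makes $N$ Schur stable for every $\rho > 0$.

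The second step is to rewrite $\hat M_{11}$ in a form that exposes this bound. Using Assumption~\ref{assume:ADMM}.\eqref{assum:ADMM:M11_M12}, together with Assumption~\ref{assume:ADMM}.\eqref{assum:ADMM:model_feas} to keep $\hat F$ of full row rank, $\hat M_{11}$ can be identified with the $(1,1)$-block of the inverse of an effective KKT matrix $\begin{bmatrix} G & \hat F^T \\ \hat F & 0 \end{bmatrix}$ with $G = \tilde H_A + \rho I$, where $\tilde H_A \succeq 0$ is the effective (quantized) Hessian, so $G \succeq \rho I \succ 0$. The standard formula for the leading block of a saddle-point inverse then gives
\[
\hat M_{11} = G^{-1} - G^{-1}\hat F^T\bigl(\hat F G^{-1}\hat F^T\bigr)^{-1}\hat F G^{-1} = G^{-1/2}(I-\Pi)G^{-1/2},
\]
with $\Pi$ the orthogonal projector onto $\image(G^{-1/2}\hat F^T)$. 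Because $0 \preceq I - \Pi \preceq I$ and $G \succeq \rho I$, this immediately yields $0 \preceq \hat M_{11} \preceq G^{-1} \preceq \frac{1}{\rho}I$, and hence $\rho(N) \le 1$.

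The last step — which I expect to be the real obstacle — is to upgrade this to a strict inequality. Tracing the chain of inequalities, $\lambda_{\max}(\hat M_{11}) = 1/\rho$ is possible only if some nonzero $v$ satisfies both $\tilde H_A v = 0$ and $\hat F v = 0$; equivalently, the effective equality-constrained subproblem fails to be coercive on $\kernel \hat F$. The $2\times 2$ instance $\tilde H_A = 0$, $\hat F = \begin{bmatrix} 1 & 0 \end{bmatrix}$ shows that without further structure this can genuinely occur and $\rho(N) = 1$, so the argument must invoke the MPC structure. For that I would reason as follows: any $z$ with $\hat F z = 0$ satisfies $x_0 = 0$ and $x_{k+1} = A_d x_k + B_d u_k$, so its state components are determined by its inputs through the homogeneous dynamics; and since $R \succ 0$ and $\sigma_2 > 0$ — the latter being precisely why a nonzero quadratic soft-constraint penalty is retained in~\eqref{cost function} — the condition $z^T \tilde H_A z = 0$ forces $\delta_k = 0$ for all $k$ and then, stage by stage, $u_0 = 0 \Rightarrow x_1 = 0 \Rightarrow u_1 = 0 \Rightarrow \cdots$, so $z = 0$. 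Hence $\kernel \tilde H_A \cap \kernel \hat F = \{0\}$, which forces $\hat M_{11} \prec \frac{1}{\rho}I$ and therefore $\rho(N) < 1$. A minor residual technicality is to confirm that the effective-KKT normal form survives quantization (the lower-right block of the inverse appearing in Assumption~\ref{assume:ADMM}.\eqref{assum:ADMM:M11_M12} need not be exactly zero); a perturbation argument based on Assumption~\ref{assume:ADMM}.\eqref{assum:ADMM:model_feas} handles this.
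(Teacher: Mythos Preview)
Your proposal is correct and follows essentially the same route as the paper: reduce to the spectrum of $\rho\hat M_{11}$ via the block-triangular structure, write $\hat M_{11} = G^{-1/2}(I-\Pi)G^{-1/2}$ with $G = \hat H_A + \rho I$ to get $\hat M_{11} \preceq \frac{1}{\rho}I$, and then use the MPC structure ($R\succ 0$, $\sigma_2>0$, and $\hat F v=0$ forcing $x_0=0$) to rule out $\ker\hat H_A \cap \ker\hat F \neq \{0\}$ and obtain strictness. Your final caveat about whether the effective-KKT form survives quantization is in fact more careful than the paper, which simply writes $\bigl[\begin{smallmatrix}\hat M_{11} & \hat M_{12}\\ \hat M_{12}^T & \hat M_{22}\end{smallmatrix}\bigr] = \bigl[\begin{smallmatrix}\hat Z & \hat F^T\\ \hat F & 0\end{smallmatrix}\bigr]^{-1}$ without further comment.
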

\begin{proof}
The eigenvalues of~(\ref{ADMM_err_mat}) are either $0$ or $\rho\lambda_i(\hat{M}_{11})$, so it is sufficient to show that the symmetric matrix $\hat{M}_{11}$ satisfies $\rho\|\hat M_{11}\| < 1$.    Recalling that 
\[
\begin{bmatrix}
\hat{M}_{11} & \hat{M}_{12} \\
\hat{M}_{12}^T & \hat{M}_{22}
\end{bmatrix} = \begin{bmatrix}
\hat{Z} & \hat{F}^T \\
\hat{F}             & 0
\end{bmatrix}^{-1}
\]
where $\hat{Z} := \hat{H}_{A} + \rho I \succ 0$,  the matrix inversion lemma provides the identity
\begin{align}
\hat{M}_{11} &= \hat{Z}^{-\half}\left[I - \hat{Z}^{-\half}\hat{F}^T(\hat{F}\hat{Z}^{-1}\hat{F}^T)^{-1}\hat{F}\hat{Z}^{-\half}\right]\hat{Z}^{-\half} \nonumber\\
			 &=: \hat{Z}^{-\half}\hat{P}\hat{Z}^{-\half}, \label{M:MIL}
\end{align}
where $\hat{P}$ is a projection onto the kernel of $\hat{F}\hat{Z}^{-\half}$, hence $\|\hat M_{11}\| \le \|\hat{Z}^{-\half}\|\|{\hat{P}}\|\|\hat{Z}^{-\half}\| = \|{\hat{Z}^{-1}}\|$.  It follows that
\[
\rho\|\hat{M}_{11}\| \le \rho\|(\hat{H}_A + \rho I)^{-1}\| \le \rho\cdot\frac{1}{\lambda_{\min}(\hat{H}_A) + \rho} \le 1,
\]
where $\lambda_{\min}(\hat{H}_A)$ is the smallest eigenvalue of the positive semidefinite matrix $\hat{H}_A$.  If $\hat{H}_A$ is actually positive definite, then the preceding inequality is strict and the proof is complete.

Otherwise, to prove that the inequality is strict we must show that $1/\rho$ is not an eigenvalue for $\hat{M}_{11}$ (which is positive semidefinite by virtue of \eqref{M:MIL}).  Assume the contrary, so that there exists some eigenvector $v$ of $\hat{M}_{11}$ with eigenvalue $1/\rho$, and some additional (arbitrary) vector $q$ that solves the linear system
\[
\begin{bmatrix}
v\\q
\end{bmatrix} = 
\begin{bmatrix}
\hat{Z} & \hat{F}^T \\
\hat{F}             & 0
\end{bmatrix}^{-1}
\begin{bmatrix}
\rho \cdot v\\0
\end{bmatrix}.
\]

Any solution must then satisfy both $\hat{H}_Av \in \image(\hat{F}^T)$ and $v\in \kernel(\hat{F})$.  Consequently $v^T\hat{H}_Av = 0$, which requires $v\in\kernel(\hat{H}_A)$ since $\hat{H}_A$ is positive semidefinite.  Recall that any such $v$ can be decomposed into $v = (u_0,\dots,u_{N-1},x_0,\delta_0,\dots,x_N,\delta_{N})$.  If the quadratic penalty for each $\delta_i$ is positive definite, then  $v\in\kernel(\hat{H}_A)$ requires each $\delta_i = 0$.

Since $\hat{F}v = 0$, the remaining components of $v$ must correspond to a sequence of state and inputs compatible with the system dynamics in \eqref{cost function}, starting from an initial state $x_0 = 0$. Any solution $v\neq 0$ would then require at least one component $u_i\neq 0$. Then $v^T \hat{H}_Av \ge u_i^T Ru_i > 0$ since $R$ is assumed positive definite, a contradiction.

\end{proof}

\subsubsection{Arithmetic Errors Bounds for the Fast Gradient Method and ADMM}~

Finally, for both the fast gradient method and ADMM we can use Lemmas
\ref{lem:FGMstable} and \ref{lem:ADMMstable} to establish an upper bound on the magnitude of error~$\eta_i$ for \emph{any} arithmetic round-off errors that might have occurred up to iteration~$i$. %
\begin{proposition}\label{lem:etabound}
Let $b$ be the number of fraction bits and $n$ be the dimension of the decision vector. Consider the error dynamics due to arithmetic round-off in~\eqref{eq:onestep_fg} or in~(\ref{eq:onestep_admm}), assuming no error reduction from projection.
The magnitude of \emph{any} {accumulation of round-off errors} up to iteration~$i$, $\| \eta_i \| = \| \hat{z}_i - z_i \|$,  is upper-bounded by
\begin{align}\label{eq:etabound}
\bar{\eta}_i \!=\! \| E A^i \| \biggl\| \! \begin{bmatrix} \eta_0 \\ \eta_0 \end{bmatrix} \! \biggr\| \!+\! 2^{-b} \! \sqrt{n (1\!+\!n^2)}  \sum_{k=0}^{i-1}\! \|E A^{i-1-k} B\|\!
\end{align}
for all $i = 0, \ldots, I_{\max}-1$, where matrix $E = \begin{bmatrix} I & 0 \end{bmatrix}$.
\end{proposition}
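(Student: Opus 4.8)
The plan is to treat \eqref{eq:onestep_fg} and \eqref{eq:onestep_admm} as a common linear time-invariant recursion $\xi_{i+1} = A\xi_i + B\upsilon_i$ (dropping the projection-induced contractions, i.e.\ setting $\diag{\epsilon_{\pi,i}}=I$, which is justified because projection can only reduce the error as argued before each lemma), and to unroll it. First I would write the explicit solution
\[
\xi_i = A^i \xi_0 + \sum_{k=0}^{i-1} A^{i-1-k} B\,\upsilon_k,
\]
with $\xi_0 = \bigl[\begin{smallmatrix}\eta_0\\\eta_0\end{smallmatrix}\bigr]$ in both cases (using $\eta_{-1}=\eta_0$ for FGM and the analogous initialization for ADMM). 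Left-multiplying by $E=\begin{bmatrix}I&0\end{bmatrix}$ extracts $\eta_i$, giving $\eta_i = E A^i \xi_0 + \sum_{k=0}^{i-1} E A^{i-1-k} B\,\upsilon_k$. Taking norms and applying the triangle inequality together with submultiplicativity yields
\[
\|\eta_i\| \le \|E A^i\|\,\|\xi_0\| + \sum_{k=0}^{i-1}\|E A^{i-1-k}B\|\,\|\upsilon_k\|.
\]
It then remains to bound $\|\upsilon_k\|$ uniformly. In both recursions $\upsilon_k$ stacks two error vectors, one componentwise in $[-n2^{-b},0]$ (the $n$ accumulated truncation errors of a matrix–vector product) and one componentwise in $[-2^{-b},2^{-b}]$ (a couple of scalar–vector multiplications); hence $\|\upsilon_k\|^2 \le n\,(n2^{-b})^2 + n\,(2^{-b})^2 = 2^{-2b} n(1+n^2)$, i.e.\ $\|\upsilon_k\| \le 2^{-b}\sqrt{n(1+n^2)}$. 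Substituting this uniform bound and pulling it out of the sum gives exactly \eqref{eq:etabound}.

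The one genuine point that needs care — and the reason the statement is nontrivial rather than a tautology — is that the right-hand side of \eqref{eq:etabound} must stay bounded as $i\to\infty$, i.e.\ the series $\sum_{k\ge 0}\|E A^{i-1-k}B\|$ must converge uniformly in $i$. This is where Lemmas~\ref{lem:FGMstable} and~\ref{lem:ADMMstable} enter: Schur stability of $A$ (respectively of the relevant block $N$, noting the zero rows make the spectral radius of the ADMM $A$-matrix equal to that of $N$) gives $\rho(A)<1$, so there exist constants $c\ge 1$ and $\alpha\in(0,1)$ with $\|A^j\|\le c\,\alpha^j$; then $\sum_{k=0}^{i-1}\|E A^{i-1-k}B\| \le \|E\|\,\|B\|\,c\sum_{j\ge 0}\alpha^j = \|B\|\,c/(1-\alpha) < \infty$, and likewise $\|E A^i\|\to 0$. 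Thus $\lim_{i\to\infty}\bar\eta_i$ is finite, which is the property the proposition is really asserting.

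The main obstacle is bookkeeping rather than conceptual: one must verify that the two superficially different recursions \eqref{eq:onestep_fg} and \eqref{eq:onestep_admm} both fit the template $\xi_{i+1}=A\xi_i+B\upsilon_i$ with compatible initializations and with $\upsilon_k$ obeying the stated componentwise bounds, so that a single inequality covers both algorithms. A secondary subtlety is arguing cleanly that discarding the $\diag{\epsilon_{\pi,i}}$ factors yields an \emph{upper} bound — i.e.\ that reinstating them can only contract — which follows because each such factor is a diagonal matrix with entries in $[0,1]$ and the true error $\eta_i$ is in any case confined to $\widehat{\SetK}$; this is exactly the "no error reduction from projection" hypothesis stated in the proposition, so it may simply be invoked. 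Beyond that, the argument is just the unrolled variation-of-constants formula plus norm estimates.
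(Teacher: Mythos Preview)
Your proposal is correct and follows essentially the same approach as the paper's proof: unroll the linear recurrence to obtain $\xi_i = A^i\xi_0 + \sum_{k=0}^{i-1} A^{i-1-k}B\,\upsilon_k$, extract $\eta_i$ via $E$, apply the triangle inequality and submultiplicativity, and bound $\|\upsilon_k\|\le 2^{-b}\sqrt{n(1+n^2)}$ uniformly. The paper's proof is even terser (three lines) and defers your convergence discussion via Schur stability to the paragraph following the proposition rather than including it in the proof itself.
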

\begin{proof}
From the one-step recurrence~\eqref{eq:onestep_fg} or~(\ref{eq:onestep_admm}) we find that
\begin{align*}
\xi_{i} = A^{i} \, \xi_0  +  \sum_{k=0}^{i-1} A^{i-1-k} B \upsilon_k, \quad  i = 0, 1, \ldots I_{\max}-1,
\end{align*}
such that the result is obtained from applying the properties of the matrix norm. Observe that $2^{-b} \sqrt{n (1 + n^2)}$ is the maximum magnitude of $\upsilon_k$ for any $k = 0, \ldots, i-1$.
\end{proof}

Since the matrix~$A$ is Schur stable, the bound in~\eqref{eq:etabound} converges.  Indeed, the effect of the initial error~$\xi_0$ decays according to
\begin{align}\label{eq:conv1}
\| E A^i \| \propto \rho(A)^i,
\end{align}
whereas the term driven by arithmetic round-off errors in every iteration behaves according to
\begin{align}\label{eq:conv2}
\sum_{k=0}^{i-1} \|E A^{i-1-k} B\| \propto \frac{1}{1 - \rho(A)} - \frac{\rho(A)^i}{1 - \rho(A)}.
\end{align}
This result can be used to choose the number of bits $b$ {\em a priori} to meet accuracy specifications on the minimizer.

\section{Embedded Hardware Architectures for First-Order Solution Methods}\label{hw-sec}

Amdahl's law~\cite{Amdahl67} states that the potential acceleration of an optimization algorithm through parallelization is limited by the fraction of sequential dependencies in the algorithm. First-order optimization methods such as the fast gradient method and ADMM have a smaller number of sequential dependencies than interior-point or active-set methods. In fact, a very large fraction of the computation involves a single readily parallelizable matrix-vector multiplication, hence the expected benefit from parallelization is substantial.  Our implementations of both the fast gradient method (Algorithm~\ref{alg:fg2-algorithm}) and ADMM (Algorithm~\ref{alg:admm-algorithm}) differ somewhat from more conventional implementations of these methods in order to minimize sequential dependencies. Observe that in both of our algorithms, the computations of the individual vector components are independent and the only communication occurs during matrix-vector multiplication. This allows for efficient parallelization given the custom computing and communication architectures discussed next. 
Specifically, we describe a tool that takes as inputs the data type, number of bits, level of parallelism and the delays of an adder/subtracter ($l_A$) and multiplier ($l_M$) and automatically generates a digital architecture described in the VHDL hardware description language.

%%%%%%%%%%%%%%%%%%%%%%%%%%%%%%%%%%%%%
\subsection{Hardware Architecture for the Fast Gradient Method}

For a fixed-point data type, the parameterized architecture implementing Algorithm~\ref{alg:fg2-algorithm} for problem~\eqref{costfuncQP} is depicted in Figure~\ref{fg-cirucit}. The matrix-vector multiplication is computed in the block labeled 
$\hat{v}^T\hat{w}$, which is shown in detail in Figure~\ref{fig:dot_product}. It consists of an array of $N n_u$ parallel multipliers followed by an adder reduction tree of depth $\lceil \log_2 N n_u \rceil$.
The architecture for performing the projection operation on the set~$\widehat{\SetK}$ is shown in Figure~\ref{fig:projection}. It compares the incoming value with the upper and lower bounds for that component. Based on the result, the component is either saturated or left unchanged. 

The amount of parallelism in the circuit is parameterized by the parameter~$P$. In Figure~\ref{fg-cirucit}, $P\!=\!1$, meaning that there is parallelism within each dot-product but the that $N n_u$ dot-products required for matrix-vector multiplication are computed sequentially. If the level of parallelization is increased to $P\!=\!2$, there will be two copies of the shaded circuit in Figure~\ref{fg-cirucit} operating in parallel, one computing the odd components of $\hat{y}_i$ and $\hat{z}_i$, the other computing the even. The different blocks communicate through a serial-to-parallel shift register that accepts $P$ serial streams and outputs~$N n_u$ parallel values for matrix-vector multiplication. These $N n_u$ values are the same for all blocks. It takes $\left\lceil \frac{N n_u}{P} \right\rceil$ clock cycles to have enough data to start a new iteration, hence the number of clock cycles needed to compute one iteration of the fast gradient method for $P \in \{1, \ldots, N n_u \}$ is
\begin{equation}\label{latency expression}
L_F:=\left\lceil \frac{N n_u}{P} \right\rceil + l_{A}\lceil \log_2 N n_u \rceil + 2l_{M} +3l_{A} + 1 \, .
\end{equation}

\ifTwoColumn
	\begin{figure}
	\centering
	\setlength{\unitlength}{1cm}
	{\normalsize
	\begin{picture}(8,3)(0,0)
    		\put(0,0){\includegraphics[width=0.9\columnwidth]{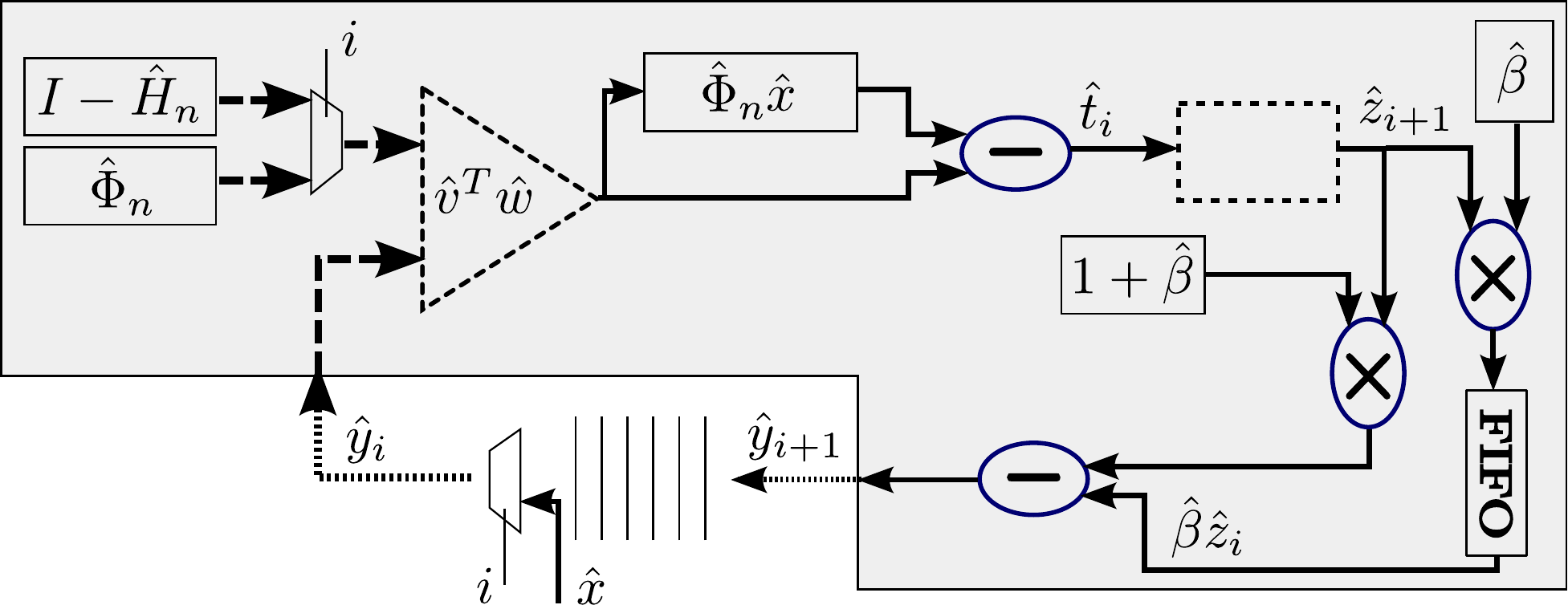}}
    		\put(6.25,2.25){\Large$\pi_{\widehat{\SetK}}$}
	\end{picture}
	}
	\caption{Fast gradient compute architecture. Boxes denote storage elements and dotted lines 	represent $N n_u$ parallel vector links. The dot-product block $\hat{v}^T \hat{w}$ and the projection block $\pi_{\widehat{\SetK}}$ are depicted in Figures~\ref{fig:dot_product} and~\ref{fig:projection} in detail. FIFO stands for first-in first-out memory and is used to hold the values of the current iterate for use in the next iteration. In the initial iteration, the multiplexers allow $\hat{x}$ and $\hat{\Phi}_n$ through and the result~$\hat{\Phi}_n\hat{x}$ is stored in memory. In the subsequent iterations, the multiplexers allow $\hat{y}_i$ and $I-\hat{H}_n$ through and $\hat{\Phi}_n\hat{x}$ is read from memory.}
	\label{fg-cirucit}
	%\vspace{-2pt}
	\end{figure}
\else
	\ifTechReport
		\begin{figure}
		\centering
		\setlength{\unitlength}{1cm}
		{\normalsize
		\begin{picture}(14,5)(0,0)
    			\put(0,0){\includegraphics[width=0.8\columnwidth]{fg_circuit_new2}}
    			\put(10,3.6){\Large$\pi_{\widehat{\SetK}}$}
		\end{picture}
		}
		\caption{Fast gradient compute architecture. Boxes denote storage elements and dotted lines 	represent $N n_u$ parallel vector links. The dot-product block $\hat{v}^T \hat{w}$ and the projection block $\pi_{\widehat{\SetK}}$ are depicted in Figures~\ref{fig:dot_product} and~\ref{fig:projection} in detail. FIFO stands for first-in first-out memory and is used to hold the values of the current iterate for use in the next iteration. In the initial iteration, the multiplexers allow $\hat{x}$ and $\hat{\Phi}_n$ through and the result~$\hat{\Phi}_n\hat{x}$ is stored in memory. In the subsequent iterations, the multiplexers allow $\hat{y}_i$ and $I-\hat{H}_n$ through and $\hat{\Phi}_n\hat{x}$ is read from memory.}
		\label{fg-cirucit}
		%\vspace{-2pt}
		\end{figure}	
	\else
		\begin{figure}
		\centering
		\setlength{\unitlength}{1cm}
		{\normalsize
		\begin{picture}(14,5)(0,0)
    			\put(0,0){\includegraphics[width=0.8\columnwidth]{fg_circuit_new2}}
    			\put(10.35,3.7){\Large$\pi_{\widehat{\SetK}}$}
		\end{picture}
		}
		\caption{Fast gradient compute architecture. Boxes denote storage elements and dotted lines 	represent $N n_u$ parallel vector links. The dot-product block $\hat{v}^T \hat{w}$ and the projection block $\pi_{\widehat{\SetK}}$ are depicted in Figures~\ref{fig:dot_product} and~\ref{fig:projection} in detail. FIFO stands for first-in first-out memory and is used to hold the values of the current iterate for use in the next iteration. In the initial iteration, the multiplexers allow $\hat{x}$ and $\hat{\Phi}_n$ through and the result~$\hat{\Phi}_n\hat{x}$ is stored in memory. In the subsequent iterations, the multiplexers allow $\hat{y}_i$ and $I-\hat{H}_n$ through and $\hat{\Phi}_n\hat{x}$ is read from memory.}
		\label{fg-cirucit}
		%\vspace{-2pt}
		\end{figure}
	\fi
\fi

\begin{figure}
        \centering
        \begin{subfigure}[t]{0.40\columnwidth}
                \centering
                \includegraphics[width=0.9\textwidth]{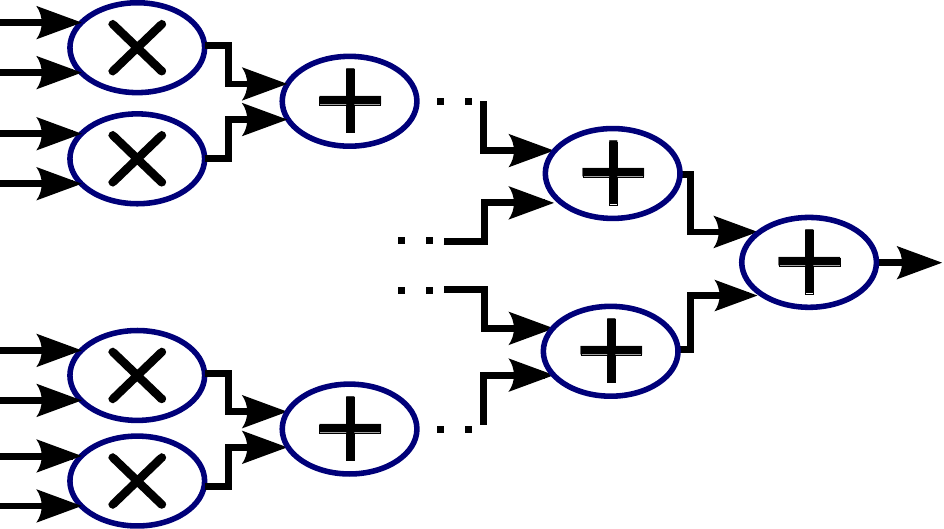}
                \caption{Dot-product block with parallel tree architecture.}
                \label{fig:dot_product}
        \end{subfigure}
	~
        \begin{subfigure}[t]{0.55\columnwidth}
                \centering
                \raisebox{3ex}{
                	\includegraphics[width=0.60\textwidth]{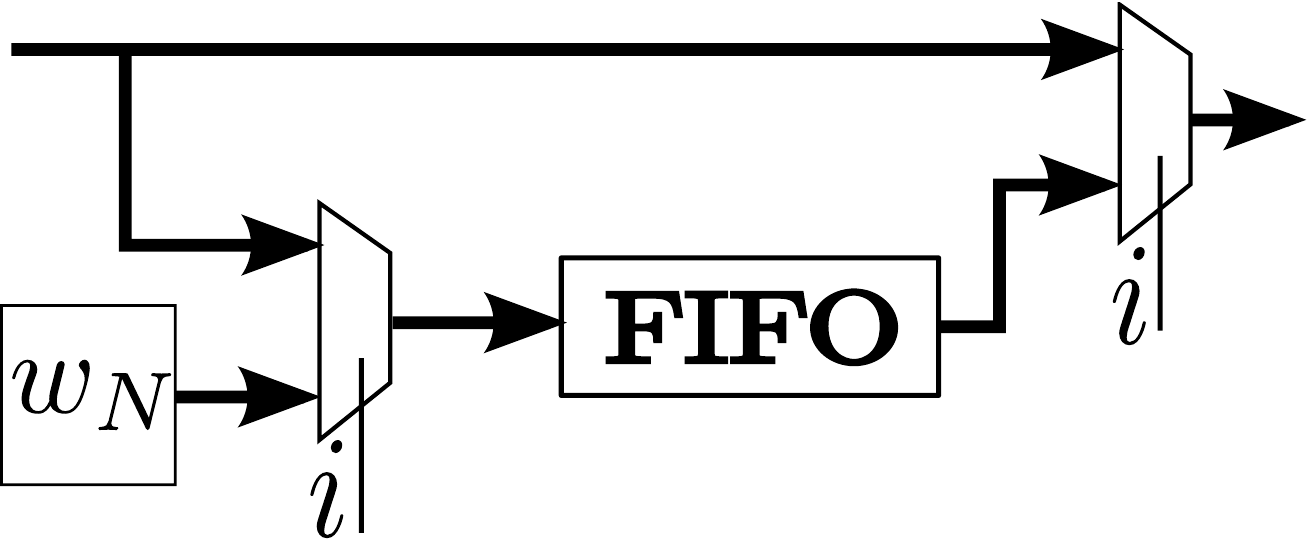}
				}
                \caption{Hardware support for warm-starting, which adds one cycle delay. The last entries of the vector are padded with $w_N$, which can be constant or depend on previous values. }
                \label{fig:warm_start}
        \end{subfigure}
        \caption{Architectures of dot-product and warm-starting.}
        %\label{subcircuits}
        %\vspace{-13pt}
\end{figure}

%Note that in a custom hardware implementation the number of execution cycles is exact. Providing cycle accurate completion guarantees is very important for reliability in high-speed real-time applications~\cite{LeeSeshia2011}. Expression~(\ref{latency expression}) suggests that there will be diminishing returns to parallelization -- a consequence of Amdahl's law. However, (\ref{latency expression})~also suggests that if there are enough resources available, the effect of the problem size on increased computational delay is only logarithmic in the worst case. As Moore's law continues to deliver devices with greater transistor densities, the possibility of implementing algorithms in a fully parallel fashion for medium size optimization problems will soon become a reality.

Expression~(\ref{latency expression}) suggests that there will be diminishing returns to parallelization -- a consequence of Amdahl's law. However, (\ref{latency expression})~also suggests that if there are enough resources available, the effect of the problem size on increased computational delay is only logarithmic in the worst case. As Moore's law continues to deliver devices with greater transistor densities, the possibility of implementing algorithms in a fully parallel fashion for medium size optimization problems is becoming a reality.

% Put this at the end when we compare against 'exact' arithmetic:
%\begin{assumption}
%Even though double precision floating-point arithmetic also incurs round-off errors, since we are looking to operate using a significantly smaller number of bits than the 53 mantissa bits available in double precision, we assume floating-point arithmetic to be exact. 
%\end{assumption}

%%%%%%%%%%%%%%%%%%%%%%%%%%%%%%%%%%%%%
\subsection{Hardware Architecture for ADMM}

Algorithm~\ref{alg:admm-algorithm} shares the same computational patterns with Algorithm~\ref{alg:fg2-algorithm}. Matrices~$\hat{M}_{11}$ and~$\hat{M}_{12}$ have the same dense structure as matrices~$I-\hat{H}_n$ and~$\hat{\Phi}_n$, hence the high-level architecture is very similar and we do not include it here to avoid replication. The differences lie in the size of the matrices, which affect the number of clock cycles to compute one iteration
\begin{equation}\label{latency_expression_ADMM}
L_A:=\left\lceil \frac{n_A}{P} \right\rceil + l_{A}\lceil \log_2 \left (n_A\right) \rceil + l_{M} +6l_{A} + 2 \, ,
\end{equation}
where $n_A:=N(n_u+n_x + |\soft|)+n_x + |\soft|$,
warm-starting support for variables $z$ and $\nu$ (shown in Figure~\ref{fig:warm_start}), and the projection block for supporting soft state constraints described in Figure~\ref{fig:cone_projection}. This block performs the projection of the pair $(x,\delta)$ onto the set satisfying $\left \{|x-c| \leq r + \delta, \delta \geq 0 \right \}$ by using an explicit solution map for the projection operation and computing the search procedure efficiently. In fact, only $l_A$ extra cycles are needed compared to the standard hard-constrained projection. The block performs a set of comparisons that are used to drive the select signal of a multiplexer.

\ifTwoColumn
	\begin{figure}
        	\centering
        	\begin{subfigure}[t]{0.33\columnwidth}
                	\centering
                	\includegraphics[width=0.95\textwidth]{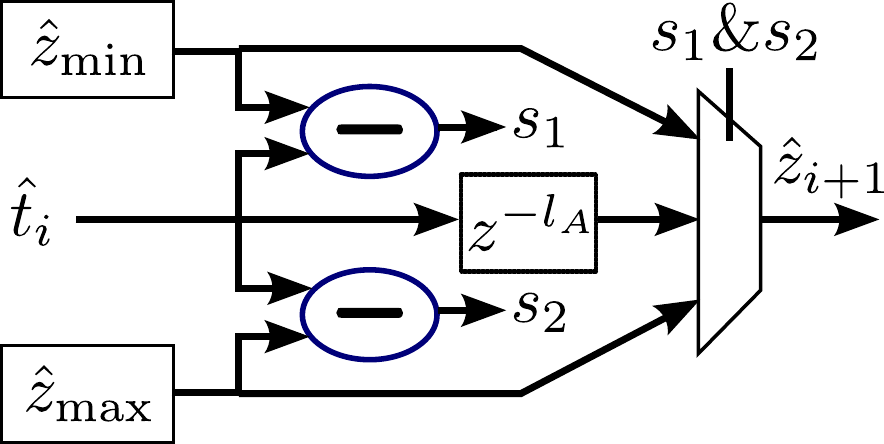}
                	\caption{Box projection block.  The total delay from $\hat{t}_i$ to $\hat{z}_{i+1}$ is $l_A + 1$.}
                	\label{fig:projection}
        	\end{subfigure}
		~
        	\begin{subfigure}[t]{0.625\columnwidth}
                	\centering
                	\includegraphics[width=0.98\textwidth]{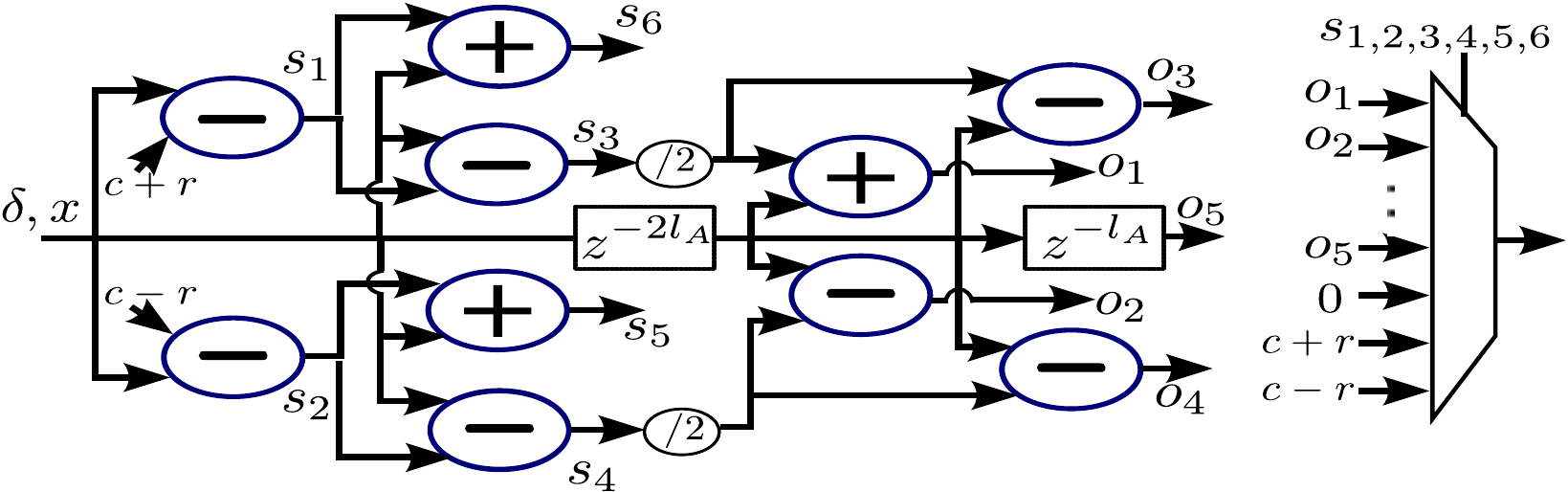}
                	\caption{Cone projection block. The total delay for each component is $2l_A + 1$. $x$ and $\delta$ are assumed to arive and leave in sequence.}
                	\label{fig:cone_projection}
        	\end{subfigure}
        	\caption{Projection architectures. A delay of $l_A$ cycles is denoted by $z^{-l_{A}}$.}
        	%\vspace{-13pt}
	\end{figure}
\else
	\begin{figure}
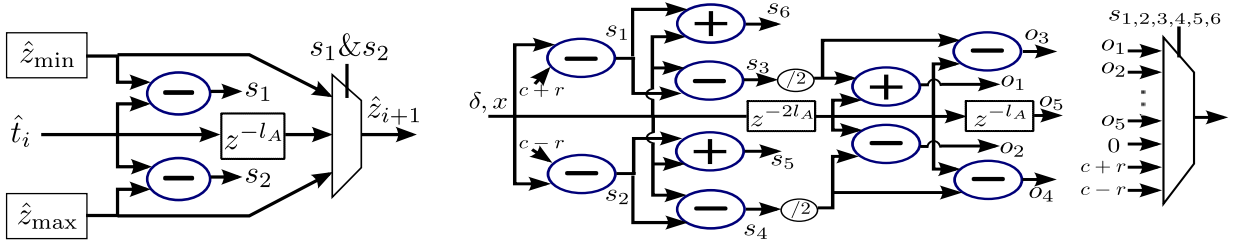

        	\centering
        	\begin{subfigure}[t]{0.35\columnwidth}
                	\centering
                	\includegraphics[width=0.95\textwidth]{saturation}
                	\caption{Box projection block.  The total delay from $\hat{t}_i$ to $\hat{z}_{i+1}$ is $l_A + 1$.}
                	\label{fig:projection}
        	\end{subfigure}
		~
        	\begin{subfigure}[t]{0.625\columnwidth}
                	\centering
                	\includegraphics[width=0.98\textwidth]{cone_projection2}
                	\caption{Cone projection block. The total delay for each component is $2l_A + 1$. $x$ and $\delta$ are assumed to arive and leave in sequence.}
                	\label{fig:cone_projection}
        	\end{subfigure}
        	\caption{Projection architectures. A delay of $l_A$ cycles is denoted by $z^{-l_{A}}$.}
        	%\vspace{-13pt}
	\end{figure}
\fi

Note that since multiplication and division by powers of two requires no resources in hardware (just a reinterpretation of an array of signals), if $\rho$ is restricted to be a power of two, no hardware multipliers are required in ADMM outside of the matrix-vector multiplication block. Table~\ref{table_resources} compares the resources required to implement the two architectures. Again, with ADMM we trade higher resource requirements and longer delays for the capability to solve more general problems. 

\begin{table}
\caption{Resources required for the fast gradient and ADMM computing architectures.}
\centering
%\scriptsize
\begin{tabular}{c || cc} % \hline \hline 
				& Fast gradient 					& ADMM 				\\ \hline %\hline 
 multipliers    			& $P\left[N n_u + 2\right]$ 			& $Pn_A$ 	\\
 adders/subtracters		& $P\left[N n_u + 3\right]$			& $P\left[n_A + 15 \right]$	\\
 memory blocks		& $P\left[N n_u + n_x + 4\right]$		& $P\left[ n_A + 8 \right]$		\\
 size of memory blocks   	& $\left\lceil \frac{N n_u}{P}\right\rceil$	& $\left\lceil \frac{n_A}{P}\right\rceil$		\\%\hline
\end{tabular}
\label{table_resources}
\end{table}

Note that in a custom hardware implementation of either of our two methods, the number of execution cycles per iteration is exact. We also employ a fixed number of iterations in our implementations of both algorithms, rather than implementing a numerical convergence test, since such convergence tests represent a somewhat self-defeating computational bottleneck in a hard real-time context. Providing cycle accurate completion guarantees is critical for reliability in high-speed real-time applications~\cite{LeeSeshia2011}.

% !TEX root = journal2013.tex

\section{Numerical Benchmark Study}\label{sec:benchmark}

We reported an implementation of the fast gradient architecture in the preliminary publication~\cite{Jerez2013} to implement an input-constrained MPC controller for a real-world, highly dynamic positioning system inside an atomic force microscope requiring a sampling rate in excess of 1MHz. In this paper, for easier comparison with the existing literature, we use a widely studied benchmark example consisting of a set of oscillating masses attached to walls~\cite{Wang2010,Kogel2011}, as illustrated by Figure~\ref{fig:spring_mass}.  The system is sampled every 0.5 seconds assuming a zero-order hold and the masses and the spring constants have a value of 1kg and 1Nm$^{-1}$, respectively\footnote{Note that we choose this sampling time and parameter set for ease of comparison to other published results. Our implemented methods require computation times on the order of 1$\mu$s, as we report later in this section.}. The system has four control inputs and two states for each mass, its position and velocity, for a total of eight states. The goal of the controller, with parameters $N=10$, $Q=I$ and $R=I$, is to track a reference for the position of each mass while satisfying the system limits.

%Specifically, we use the setup described in~\cite{Wang2010}, illustrated by Figure~\ref{fig:spring_mass}, where 

\begin{figure}
	\centering
    	\includegraphics[width=0.5\columnwidth]{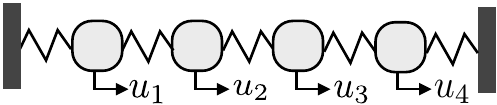}
	\caption{Oscillating masses example.}
	\label{fig:spring_mass}	
\end{figure}

We consider first the case where the control inputs are constrained to the interval~$[-0.5,0.5]$ and the optimization problem~(\ref{costfuncQP}) with 40 optimization variables is solved via the fast gradient method. Secondly, we consider additional hard constraints on the rate of change in the inputs on the interval~$[-0.1,0.1]$ and soft constraints on the states corresponding to the mass positions on the interval~$[-0.5,0.5]$. The remaining states are left unconstrained. The state is augmented to enforce input-rate constraints, and the further inclusion of slack variables increases the dimension of the state vector to $n_x=12$. Note that for problems of this size, MPC control designs based on parametric programming \cite{Bemporad2002,Comaschi2012} are generally not tenable, necessitating  online optimization methods. The resulting problem with 216 optimization variables in the form~(\ref{costfuncQP-ADMM}) is solved via ADMM. The closed-loop trajectories using an MPC controller based on a double precision solver running to optimality are shown in Figure~\ref{fig:trajectories}, where all the constraints become active for a significant portion of the simulation. We do not include any disturbance model in our simulation, although the presence of an exogenous disturbance signal would not lead to infeasibility since the MPC implementation includes only soft-constrained states.   Trajectories arising from closed-loop simulation using a controller based on our fixed-point methods are indistinguishable from those in Figure~\ref{fig:trajectories}, so are excluded for brevity.

\ifTwoColumn
	\begin{figure}
		\centering
	 	\begin{subfigure}[t]{\columnwidth}
		\centering
    		\includegraphics[width=\columnwidth]{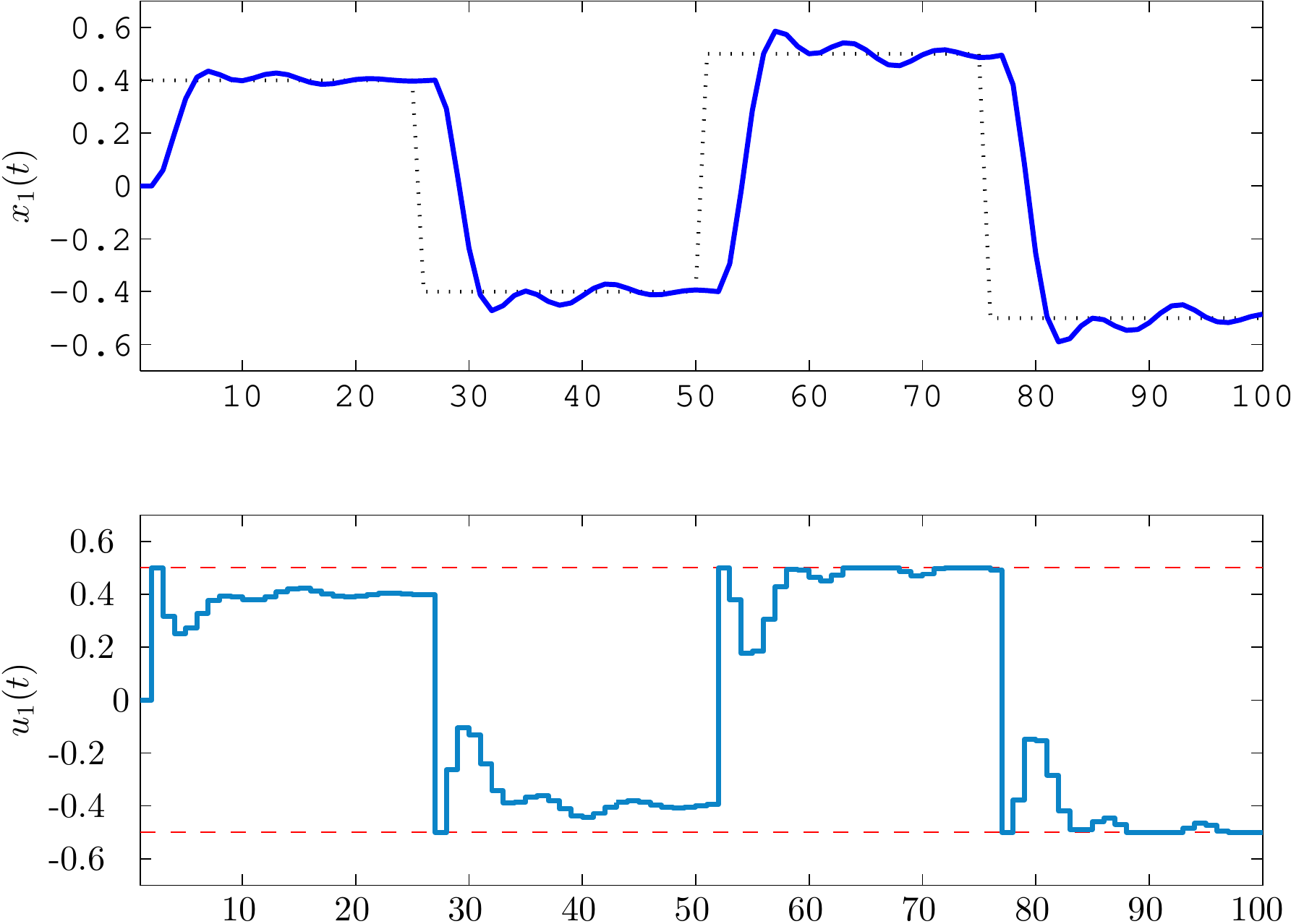}
		\caption{Trajectory with 21 samples hitting the input constraints.}
		\label{fig:input_traj}
        	\end{subfigure}
	 	\begin{subfigure}[t]{\columnwidth}
    		\centering
		\includegraphics[width=\columnwidth]{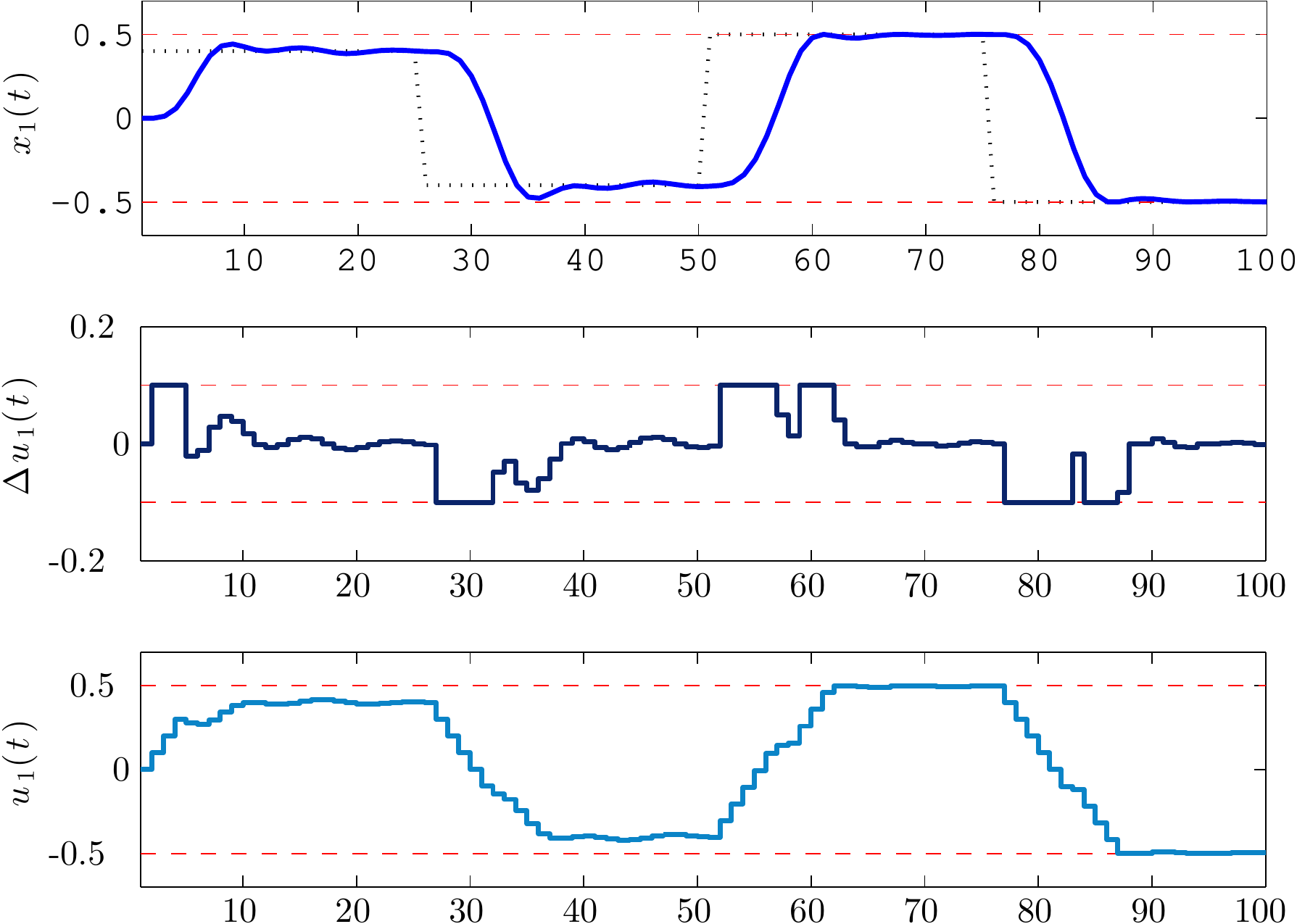}
		\caption{Trajectory with 11, 28 and 14 samples hitting the input, rate and output constraints, respectively.}
		\label{fig:state_traj}
        	\end{subfigure}
		\caption{Closed-loop trajectories showing actuator limits, desirable output limits and a time-varying reference. MPC allows for optimal operation on the constraints.}
		\label{fig:trajectories}
	\end{figure}  
\else
	\begin{figure}
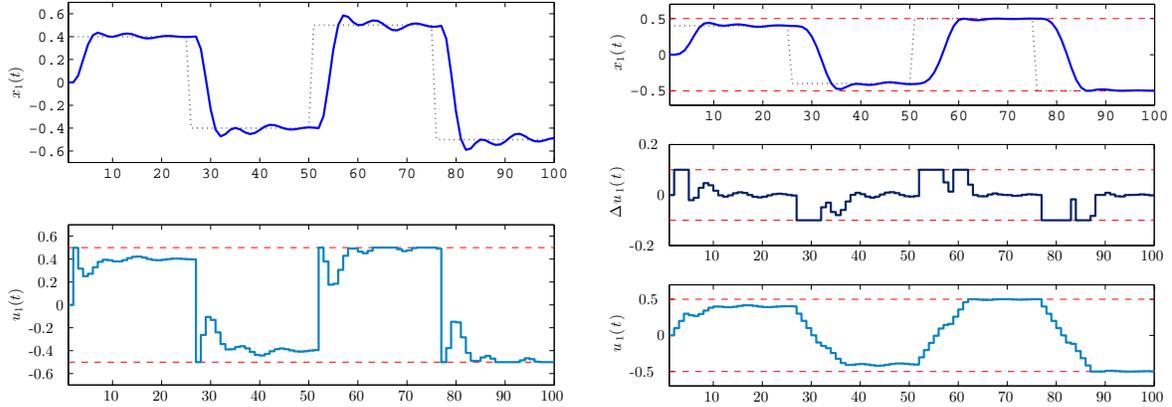

		\centering
	 	\begin{subfigure}[t]{0.5\columnwidth}
		\centering
    		\includegraphics[width=0.9\columnwidth]{input_const_traj}
		\caption{Trajectory with 21 samples hitting the input constraints.}
		\label{fig:input_traj}
        	\end{subfigure}
	 	\begin{subfigure}[t]{0.45\columnwidth}
    		\centering
		\includegraphics[width=\columnwidth]{rate_const_traj}
		\caption{Trajectory with 11, 28 and 14 samples hitting the input, rate and output constraints, respectively.}
		\label{fig:state_traj}
        	\end{subfigure}
		\caption{Closed-loop trajectories showing actuator limits, desirable output limits and a time-varying reference. MPC allows for optimal operation on the constraints.}
		\label{fig:trajectories}
	\end{figure}  
\fi

As a reference for later comparison, an input-constrained problem with two inputs and 10 states, formulated as an optimization problem of the form~(\ref{costfuncQP}) with 40 variables, was solved in~\cite{Kogel2011} using the fast gradient method in approximately 50~$\mu$seconds. In terms of state-constrained implementations, a problem with three inputs and 12 states, formulated as a sparse quadratic program with hard state constraints and 300 variables, was solved in~\cite{Wang2010} using an interior-point method reporting computing times in the region of 5 milliseconds, while the state constraints remained inactive. In both cases, the solvers were implemented in software on high-performance desktop machines.

\ifTwoColumn
	\begin{figure}
		\centering
	 		\begin{subfigure}[t]{\columnwidth}
			\centering
    			\includegraphics[width=\columnwidth]{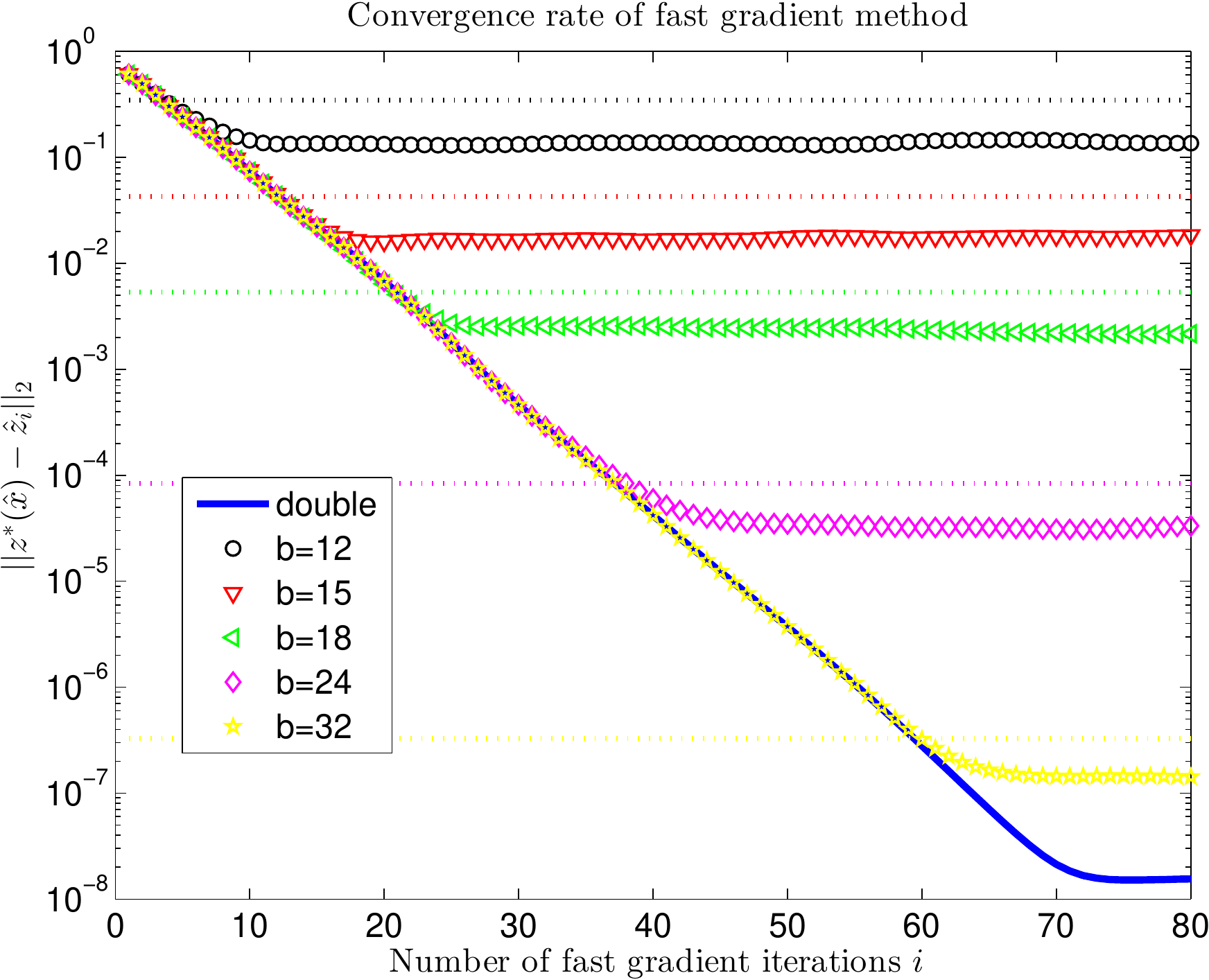}
			\label{fg-convergence}
			\end{subfigure}
	 		\begin{subfigure}[t]{\columnwidth}
    			\centering
			\includegraphics[width=\columnwidth]{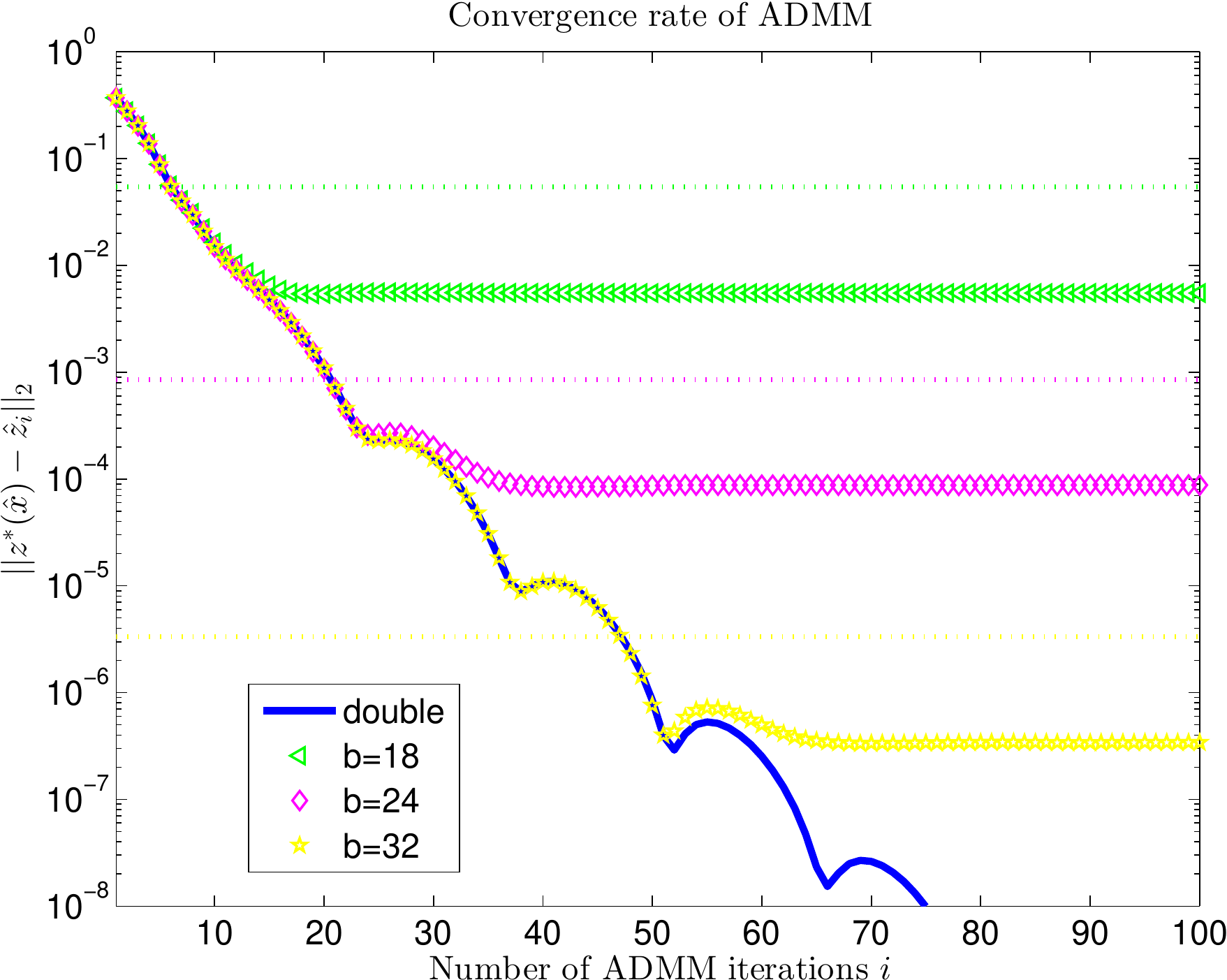}
			\label{fig:admm-convergence}
        		\end{subfigure}
		\caption{Theoretical error bounds given by~(\ref{eq:etabound}) and practical convergence behavior of the fast gradient method (top) and  ADMM (bottom) under different number representations.}
		\label{fig:convergence}
	\end{figure}
\else
	\begin{figure}
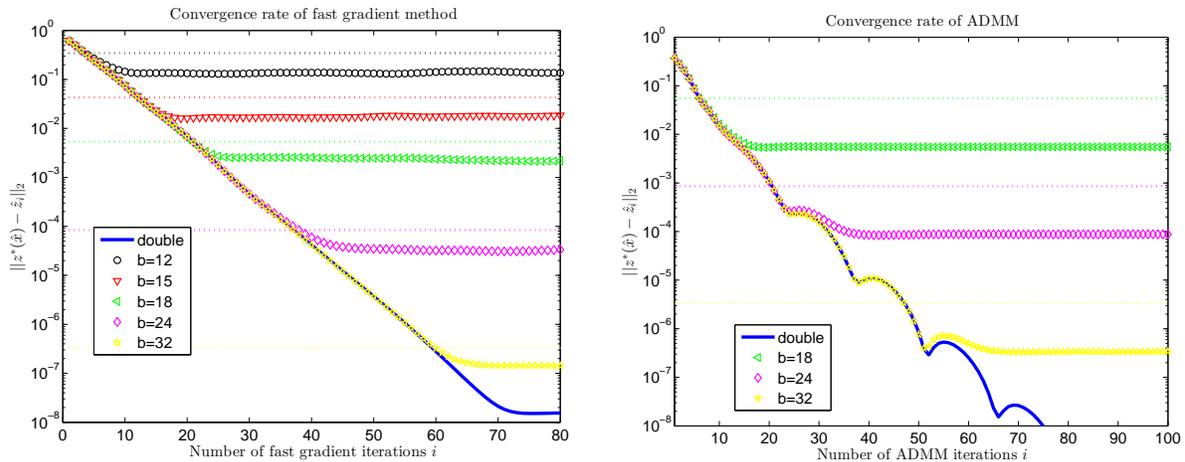

		\centering
	 		\begin{subfigure}[t]{0.5\columnwidth}
			\centering
    			\includegraphics[width=0.9\columnwidth]{FG_convergence}
			\label{fg-convergence}
			\end{subfigure}
	 		\begin{subfigure}[t]{0.45\columnwidth}
    			\centering
			\includegraphics[width=\columnwidth]{ADMM_convergence}
			\label{fig:admm-convergence}
        		\end{subfigure}
		\caption{Theoretical error bounds given by~(\ref{eq:etabound}) and practical convergence behavior of the fast gradient method (left) and  ADMM (right) under different number representations.}
		\label{fig:convergence}
	\end{figure}
\fi

Our goal is to choose the minimum number of bits and solver iterations such that the closed-loop performance is satisfactory while minimizing the amount of resources needed to achieve certain sampling frequencies. Figure~\ref{fig:convergence} shows the convergence behavior of the fast gradient method and ADMM for two samples in the simulation with an actively constrained solution. The theoretical error bounds on the residual round-off error~$\eta_i$, given by~(\ref{eq:etabound}), allow one to make practical predictions for the actual error for a given number of bits, which, as predicted by Lemma~\ref{lem:etabound} and \eqref{eq:conv1} and \eqref{eq:conv2}, converges to a finite value. Table~\ref{table_closedloop} shows the relative difference in closed-loop tracking performance for different fixed-point fast gradient and ADMM controllers compared to the optimal controller. Assuming that a relative error smaller than $0.05\%$ is desirable, using 15 solver iterations and 16 fraction bits would be a suitable choice for the fast gradient method. The problem~(\ref{costfuncQP-ADMM}) solved via ADMM appears more vulnerable to reduced precision implementation, although satisfactory control performance can still be achieved using a surprisingly small number of bits. In this case, employing more than 18 fraction bits or more than 40 ADMM iterations results in insignificant improvements.

\ifTwoColumn

\begin{table}
\caption{Percentage difference in average closed-loop cost with respect to a standard double precision implementation.  In each table, $b$ is the number of fraction bits employed and $I_{\max}$ is the (fixed) number of algorithm iterations. In certain cases, the error increases with the number of iterations due to increasing accumulation of round-off errors.}
\centering
%\scriptsize
	 \begin{subtable}[t]{\columnwidth}
	\centering
\begin{tabular}{c | ccccccc} % \hline \hline 
 $I_{\max}\backslash b$ 	&  10		& 12 		& 14 		& 16 		& 18 		& 20 		\\  \hline%& 22 		\\  \hline%& 24  		\\  \hline
 5 				& 5.30	& 2.76	& 2.87	& 3.03	& 3.05	& 3.06	\\%& 2.95 	\\
 10 				& 14.53	& 0.14 	& 0.06	& 0.18	& 0.20	& 0.02	\\%& 0.18  	\\
 15				& 17.04 	& 0.35 	& 0.25 	& 0.04 	& 0.00 	& 0.01 	\\%& 0.01 	\\%& 30.65 	\\%
 20    				& 16.08 	& 0.15 	& 0.19 	& 0.06 	& 0.01 	& 0.00 	\\%& 0.00 	\\%& 0.02  	\\
 25    				& 17.27 	& 0.15 	& 0.19 	& 0.05 	& 0.01 	& 0.00 	\\%& 0.00 	\\%& 0.01  	\\
 30    				& 16.90 	& 0.31 	& 0.21 	& 0.03 	& 0.02 	& 0.00 	\\%& 0.00 	\\%& 0.00  	\\
 35    				& 18.44 	& 0.19 	& 0.22 	& 0.05 	&  0.01	& 0.00 	\\%& 0.00 	\\%& 0.00 	\\
\hline 
\end{tabular}
\caption{FGM}
	\end{subtable}
	 \begin{subtable}[t]{\columnwidth}
	\centering
\begin{tabular}{c | cccccccc} % \hline \hline 
 $I_{\max}\backslash b$ 	&  10		& 12 		& 14 		& 16 		& 18 		& 20 		\\  \hline%& 22 		\\  \hline%& 24  		\\  \hline
% 5 				& 49.79	& 0.08	& 0.74	& 0.18	& 0.06	& 0.06	\\%& 0.07 	\\
 10 				& 53.49	& 0.18 	& 1.17	& 0.68	& 0.57	& 0.58	\\%& 0.59  	\\
 15				& 47.84 	& 0.46 	& 1.08 	& 0.63 	& 0.51 	& 0.49 	\\%& 0.51 	\\%& 30.65 	\\%
 20    				& 44.79 	& 0.76 	& 0.95 	& 0.57 	& 0.45 	& 0.42 	\\%& 0.43 	\\%& 0.02  	\\
 25    				& 47.03 	& 0.98 	& 0.86 	& 0.51 	& 0.39 	& 0.37	\\%& 0.38 	\\%& 0.01  	\\
 30    				& 45.17 	& 1.02 	& 0.82 	& 0.46 	& 0.35	& 0.32 	\\%& 0.33 	\\%& 0.00  	\\
 35    				& 46.02 	& 1.07 	& 0.81 	& 0.43 	& 0.31	& 0.28 	\\%& 0.29 	\\%& 0.00 	\\
 40    				& 46.87 	& 1.29 	& 0.74 	& 0.41 	& 0.28	& 0.25 	\\%& 0.26 	\\
\hline 
\end{tabular}
\caption{ADMM}
	\end{subtable}
\label{table_closedloop}
\end{table}

\else

\begin{table}
\caption{Percentage difference in average closed-loop cost with respect to a standard double precision implementation.  In each table, $b$ is the number of fraction bits employed and $I_{\max}$ is the (fixed) number of algorithm iterations. In certain cases, the error increases with the number of iterations due to increasing accumulation of round-off errors.}
\centering
%\scriptsize
	 \begin{subtable}[t]{0.48\columnwidth}
\begin{tabular}{c | ccccccc} % \hline \hline 
 $I_{\max}\backslash b$ 	&  10		& 12 		& 14 		& 16 		& 18 		& 20 		\\  \hline%& 22 		\\  \hline%& 24  		\\  \hline
 5 				& 5.30	& 2.76	& 2.87	& 3.03	& 3.05	& 3.06	\\%& 2.95 	\\
 10 				& 14.53	& 0.14 	& 0.06	& 0.18	& 0.20	& 0.02	\\%& 0.18  	\\
 15				& 17.04 	& 0.35 	& 0.25 	& 0.04 	& 0.00 	& 0.01 	\\%& 0.01 	\\%& 30.65 	\\%
 20    				& 16.08 	& 0.15 	& 0.19 	& 0.06 	& 0.01 	& 0.00 	\\%& 0.00 	\\%& 0.02  	\\
 25    				& 17.27 	& 0.15 	& 0.19 	& 0.05 	& 0.01 	& 0.00 	\\%& 0.00 	\\%& 0.01  	\\
 30    				& 16.90 	& 0.31 	& 0.21 	& 0.03 	& 0.02 	& 0.00 	\\%& 0.00 	\\%& 0.00  	\\
 35    				& 18.44 	& 0.19 	& 0.22 	& 0.05 	&  0.01	& 0.00 	\\%& 0.00 	\\%& 0.00 	\\
\hline 
\end{tabular}
\caption{FGM}
	\end{subtable}
	 \begin{subtable}[t]{0.45\columnwidth}
\begin{tabular}{c | cccccccc} % \hline \hline 
 $I_{\max}\backslash b$ 	&  10		& 12 		& 14 		& 16 		& 18 		& 20 		\\  \hline%& 22 		\\  \hline%& 24  		\\  \hline
% 5 				& 49.79	& 0.08	& 0.74	& 0.18	& 0.06	& 0.06	\\%& 0.07 	\\
 10 				& 53.49	& 0.18 	& 1.17	& 0.68	& 0.57	& 0.58	\\%& 0.59  	\\
 15				& 47.84 	& 0.46 	& 1.08 	& 0.63 	& 0.51 	& 0.49 	\\%& 0.51 	\\%& 30.65 	\\%
 20    				& 44.79 	& 0.76 	& 0.95 	& 0.57 	& 0.45 	& 0.42 	\\%& 0.43 	\\%& 0.02  	\\
 25    				& 47.03 	& 0.98 	& 0.86 	& 0.51 	& 0.39 	& 0.37	\\%& 0.38 	\\%& 0.01  	\\
 30    				& 45.17 	& 1.02 	& 0.82 	& 0.46 	& 0.35	& 0.32 	\\%& 0.33 	\\%& 0.00  	\\
 35    				& 46.02 	& 1.07 	& 0.81 	& 0.43 	& 0.31	& 0.28 	\\%& 0.29 	\\%& 0.00 	\\
 40    				& 46.87 	& 1.29 	& 0.74 	& 0.41 	& 0.28	& 0.25 	\\%& 0.26 	\\
\hline 
\end{tabular}
\caption{ADMM}
	\end{subtable}
\label{table_closedloop}
\end{table}

\fi

%\begin{figure}
%	\centering
%    	\includegraphics[width=0.6\columnwidth]{}
%	\caption{Input- and state-constrained closed-loop trajectory showing actuator limits, desirable output limits and a time-varying reference. }
%	\label{fig:input_traj}
%\end{figure}

For the implementation of ADMM there are a number of tuning parameters left to the control designer. Setting the regularization parameter to $\rho=2$ simplifies the implementation and provided good convergence behavior. The maximum observed value for the Lagrange multipliers~$\nu$ was~$7.8$, so the penalty parameter $\sigma_1$ was set to~$\sigma_1=8$ to obtain an exact penalty formulation as described by Theorem~\ref{thm:exactpen}. In Section~\ref{const_scaling} it was noted that the convergence of ADMM can be very slow when there is large mismatch between the size of the primal and dual variables. This problem can be largely avoided by scaling the matching condition~(\ref{ADMMmatching}) with a diagonal matrix, where the entries associated with the soft-constrained states and the slack variables are assigned $\sigma$ and the rest are assigned $1$. This scaling procedure correspond to variable transformations $y=D\tilde{y}$ and $z=D\tilde{z}$ that can be applied offline.  

In order to evaluate the potential computing performance the architectures described in Section~\ref{hw-sec} were implemented in FPGAs. For a fixed number of iterations one can calculate the execution time of the solver deterministically according to~(\ref{latency expression}) or~(\ref{latency_expression_ADMM}). The \ac{FPGA} designs can be clocked at more than 400~MHz using chips from Xilinx's high-performance Virtex~6 family or at more than 230~MHz using devices from the low cost and low power Spartan~6 family. Table~\ref{table_performance} shows the achievable sampling times on the two families for different levels of parallelization. The resource usage is stated in terms of the number of embedded multiplier blocks since this is the limiting resource in these designs. For the input-constrained problem solved via the fast gradient method, one can achieve sampling rates beyond 1~MHz with Virtex 6 devices using a modest amount of parallelization. One can also achieve sampling rates in the region of 700~kHz with Spartan~6 devices consuming in the region of 1~W of power. For the state-constrained problem solved via ADMM, since the number of variables is significantly larger, larger devices are needed and longer computational times have to be tolerated. In this case, achievable solution times range from 40kHz to 200kHz for different Virtex 6 devices. 

Note that the fastest performance numbers reported in the literature are in the millisecond region, several orders of magnitude slower than what is achievable using the techniques presented in this paper.

%With Virtex 6 devices one can achieve sampling times beyond 1~MHz for $P=2$ and close to 2~MHz for $P=16$ (maximum parallelism), whereas for Spartan~6 devices well over 600~kHz sampling frequencies are achievable with $P=2$ and close to 1~MHz for $P=7$. For Virtex~6, all designs fit inside the smallest device in the family (LX75T), whereas for Spartan~6 technology a variety of chips will be suitable for different designs. Note that the devices in the low power family will have power ratings well below 1~W, with the smallest ones burning just a few mWs.

%%%%%%%%%%%%%

\ifTwoColumn

\begin{table}[h]
\caption{Resource usage and potential performance at 400MHz (Virtex6) and 230MHz (Spartan6) with 15 and 40 solver iterations for FGM (Table \ref{table_performance:FGM}) and ADMM (Table \ref{table_performance:ADMM}), respectively. The suggested chips in the bottom two rows of each table are the smallest with enough embedded multipliers to support the resource requirements of each implementation.}
\centering

	 \begin{subtable}[t]{\columnwidth}
	\scriptsize
	\centering
\begin{tabular}{c || ccccccccccccccc} % \hline \hline 
 $P$ 			& 1 		& 2 		& 3 		& 4 		& 8 		& 16 		& 32 		\\\hline  
 multipliers    		& 42 		& 84 		& 126		& 168 		& 336 		& 672		& 1344	\\
V6 $T_s$ ($\mu$s)	& 1.95	& 1.20	& 0.98	& 0.82	& 0.64	& 0.56	& 0.53	\\
S6 $T_s$ ($\mu$s)	& 3.39	& 2.09	& 1.70	& 1.43	& 1.10	& 0.98	& 0.91	\\
V6 chip 		& LX75	& LX75	& LX75	& LX75 	& LX130	& LX240	& SX315	\\
S6 chip    		& LX45	& LX75	& LX75	& LX100	& -		& -		& -		\\\hline
\end{tabular}
\caption{FGM}\label{table_performance:FGM}
	\end{subtable}
	 \begin{subtable}[t]{\columnwidth}
		\scriptsize
	\centering
\begin{tabular}{c || ccccccccccccccc} % \hline \hline 
 $P$ 			& 1 		& 2 		& 3 		& 4 		& 5 		& 6 		& 7 		\\\hline  
 multipliers    		& 216 		& 432 		& 648		& 864 		& 1080 	& 1296	& 1512	\\
V6 $T_s$ ($\mu$s)	& 23.40	& 12.60	& 9.00	& 7.20 	& 6.20	& 5.40	& 4.90	\\
S6 $T_s$ ($\mu$s)	& 40.70	& 21.91	& 15.65	& 12.52	& 10.78	& 9.39	& 8.52	\\
V6 chip 		& LX75	& LX130	& LX240	& LX550	& SX315	& SX315	& SX475	\\
S6 chip    		& -		& -		& - 		& - 		& - 		& - 		& -		\\\hline
\end{tabular}
\caption{ADMM}\label{table_performance:ADMM}
	\end{subtable}
\label{table_performance}
\end{table}

\else

\begin{table}[h]
\caption{Resource usage and potential performance at 400MHz (Virtex6) and 230MHz (Spartan6) with 15 and 40 solver iterations for FGM (Table \ref{table_performance:FGM}) and ADMM (Table \ref{table_performance:ADMM}), respectively. The suggested chips in the bottom two rows of each table are the smallest with enough embedded multipliers to support the resource requirements of each implementation.}
\centering
%\scriptsize
	 \begin{subtable}[t]{\columnwidth}
	\centering
\begin{tabular}{c || ccccccccccccccc} % \hline \hline 
 $P$ 			& 1 		& 2 		& 3 		& 4 		& 8 		& 16 		& 32 		\\\hline  
 multipliers    		& 42 		& 84 		& 126		& 168 		& 336 		& 672		& 1344	\\
V6 $T_s$ ($\mu$s)	& 1.95	& 1.20	& 0.98	& 0.82	& 0.64	& 0.56	& 0.53	\\
S6 $T_s$ ($\mu$s)	& 3.39	& 2.09	& 1.70	& 1.43	& 1.10	& 0.98	& 0.91	\\
V6 chip 		& LX75	& LX75	& LX75	& LX75 	& LX130	& LX240	& SX315	\\
S6 chip    		& LX45	& LX75	& LX75	& LX100	& -		& -		& -		\\\hline
\end{tabular}
\caption{FGM}\label{table_performance:FGM}
	\end{subtable}
	 \begin{subtable}[t]{\columnwidth}
	\centering
\begin{tabular}{c || ccccccccccccccc} % \hline \hline 
 $P$ 			& 1 		& 2 		& 3 		& 4 		& 5 		& 6 		& 7 		\\\hline  
 multipliers    		& 216 		& 432 		& 648		& 864 		& 1080 	& 1296	& 1512	\\
V6 $T_s$ ($\mu$s)	& 23.40	& 12.60	& 9.00	& 7.20 	& 6.20	& 5.40	& 4.90	\\
S6 $T_s$ ($\mu$s)	& 40.70	& 21.91	& 15.65	& 12.52	& 10.78	& 9.39	& 8.52	\\
V6 chip 		& LX75	& LX130	& LX240	& LX550	& SX315	& SX315	& SX475	\\
S6 chip    		& -		& -		& - 		& - 		& - 		& - 		& -		\\\hline
\end{tabular}
\caption{ADMM}\label{table_performance:ADMM}
	\end{subtable}
\label{table_performance}
\end{table}

\fi

%%%%%%%%%%%%%%%%%%%%%%%%%%%%%%%%%%%%%%%%%%%%%%%%%%%%%%%%%%%%%%%%%%%%%%%%%%%%%%%%

%\input{journal2013_conclusion}
% !TEX root = journal2013.tex

\section{Acknowledgements}
This work was supported by the EPSRC (Grants EP/G031576/1 and EP/I012036/1) and the EU FP7 Project EMBOCON, as well as industrial support from Xilinx, the Mathworks, and the European Space Agency. 

%\input{journal2013_appdx}

%%%%%%%%%%%%%%%%%%%%%%%%%%%%%%%%%%%%%%%%%%%%%%%%%%%%%%%%%%%%%%%%%%%%%%%%%%%%%%%%

%References are important to the reader; therefore, each citation must be complete and correct. If at all possible, references should be commonly available publications.

\bibliographystyle{./Latex_Class_Files/IEEEtran}
\bibliography{journal2013}

\end{document}